\documentclass[a4paper,11pt]{article}

\usepackage{amsmath,amsthm,amssymb,latexsym,indentfirst,eucal}

\usepackage{newcent}

\usepackage[margin=3cm]{geometry}

\newtheorem{theorem}{Theorem}[section]
\newtheorem{proposition}[theorem]{Proposition}
\newtheorem{lemma}[theorem]{Lemma}

\newtheorem{remark}[theorem]{Remark}

\newtheorem{assumption}[theorem]{Assumption}

\begin{document}

\title{Optimal investment with bounded above utilities in discrete time markets}
\author{Mikl\'os R\'asonyi\thanks{MTA Alfr\'ed R\'enyi Institute of Mathematics, Budapest. 
The author thanks Teemu Pennanen and Ari-Pekka Perkki\"o for 
helpful discussions.}}

\date{\today}

\maketitle

\begin{abstract} 
We consider an arbitrage-free, discrete time and frictionless market. We prove that an investor maximising the
expected utility of her terminal wealth can always find an optimal investment strategy provided 
that her dissatisfaction 
of infinite losses is infinite and her utility function is non-decreasing, continuous and bounded above. 
The same result is shown for
cumulative prospect theory preferences, under additional assumptions.
\end{abstract}

\textbf{MSC (2010):} Primary 93E20, 91G10; Secondary
91B06.

\smallskip

\textbf{Keywords:} expected utility; utility maximisation; cumulative prospect theory; 
bounded above utility functions

\section{Introduction}

When deciding between two investments with random returns $X$, $Y$, the prevailing approach 
in economic theory is to compare their 
expected utilities
$Eu(X)$, $Eu(Y)$ where $u$ is a utility function describing the preferences of the decision-maker. 

In most studies, the function $u$ 
is non-decreasing and concave. While 
the former property (preferring more money to less) could hardly be contested, the latter one (corresponding to
an assumed risk-averse behaviour of the given agent) has been challenged on an empirical basis, see
e.g. \cite{kt,tk}. It has furthermore been argued there that decisions are often based on a distorted view of
events' probabilities.

In the present paper we prove a surprisingly general existence theorem for optimal strategies in a multiperiod 
market model where  
investors' preferences are not necessarily concave and may be based on distorted beliefs. 
We can accomodate piecewise concave or S-shaped (convex up to a certain point, then concave)
utility functions. The crucial assumption we make
is that $u$ should be bounded from above. In the setting of a concave $u$ very general results have
been obtained in \cite{pennanen1} which apply to markets with frictions as well. 

In the economics literature there has been a considerable amount of controversy over bounded utilities. 
Starting with \cite{menger34} and up to recently (see e.g. \cite{muravievrogers13}), it has been 
asserted that $u$ should be bounded from above as well as from below.
Counterarguments also emerged over the time, see e.g. \cite{ryan}.
We will see that a bound from 
below often excludes a meaningful optimisation problem (see Remark \ref{meaning} below), 
hence we avoid such an assumption. 

We think that in the present, discrete-time setting it is reasonable
to allow $u$ unbounded above as well as below. Nonetheless, in the present article 
we restrict our attention  
to utilities which are
bounded above. This simplifies arguments 
and allows to obtain clear-cut results (Theorems \ref{elso} and \ref{masodik}).

Most of related previous work on non-concave utilities concentrated on the case of complete market models,
\cite{bkp,jz,cp,cd,reichlinthesis,rr,r12,rr2}, only very specific continuous-time incomplete
models could be treated, \cite{reichlinthesis,rr}. One-step models were studied in
\cite{bg,hz}. For multistep, incomplete models \cite{cr,nc,rrv} are the only 
references we are aware of.
These latter papers proved existence theorems for $u$ possibly unbounded
but in the current setting of bounded above utilities the present article
obtains much sharper results, see Remarks
\ref{beyond} and \ref{beyond1}.

We consider a probability space $(\Omega,\mathcal{F},P)$ with a discrete-time filtration
$\mathcal{F}_t$, $t=0,\ldots,T$ with $T\geq 1$ and $\mathcal{F}_0$ trivial. All sigma-algebras in this
article are assumed $P$-complete. The prices of $d$ risky assets are described by the
adapted $\mathbb{R}^d$-valued process $S_t$, $t=0,\ldots,T$, set $\Delta S_j:=S_{j}-S_{j-1}$, $1\leq j\leq T$. 
Investments in the
respective assets are represented by $\mathbb{R}^d$-valued processes $\phi_t$,
$t=1,\ldots,T$, which are predictable (i.e. $\phi_t$ is $\mathcal{F}_{t-1}$-measurable).
The class of all trading strategies is denoted by $\Phi$. The riskless asset is assumed
to have unit price at all times, hence the wealth $X_t^{\phi,z}$ of a portfolio starting from initial
capital $z$ and pursuing strategy $\phi$ equals
$$
X^{\phi,z}_t=z+\sum_{j=1}^t \phi_j\Delta S_j
$$
at time $t=0,\ldots,T$.

For a random variable $B$ and initial wealth $z$, we will first look at the problem of finding $\phi^*\in\Phi$ saisfying
\begin{equation}\label{maci}
Eu(X^{\phi^*,z}_T-B)=\sup_{\phi\in\Phi}Eu(X^{\phi,z}_T-B)
\end{equation}
for some $u:\mathbb{R}\to\mathbb{R}$ measurable and bounded above. Note that in this
case the expectations (as well as related conditional expectations) are well-defined but may
take the value $-\infty$.

The quantity $\bar{u}(z):=\sup_{\phi\in\Phi}Eu(X^{\phi,z}-B)$,
$z\in\mathbb{R}$ is the indirect utility of initial capital $z$: this is the maximal
satisfaction that the given investor can derive from $z$ using the trading opportunities available
in the market.

Depending on the context, the random variable $B$ may have various interpretations: 
it can represent the payoff of a contingent
claim which the investor has to deliver at $T$ or it can be a reference point (benchmark), such as
the terminal wealth of another investor acting in a (perhaps different) market. 

We say that there is no arbitrage (NA) if, for all $\phi\in\Phi$,
$$
X^{\phi,0}_T\geq 0\mbox{ a.s.}\Rightarrow X^{\phi,0}_T=0\mbox{ a.s.}
$$
If (NA) fails and $u$ is strictly increasing then no optimal strategy $\phi^*$ exists for
the problem \eqref{maci} since $\phi^*+\phi$ would always outperform $\phi^*$ for any
$\phi$ violating (NA). Hence it is natural to assume (NA).

In Section \ref{multi} we will prove the following result:

\begin{theorem}\label{elso} Assume (NA). Let $u:\mathbb{R}\to\mathbb{R}$ be a nondecreasing and continuous function  
which is bounded above and
satisfies 
\begin{equation}\label{limma}
\lim_{x\to -\infty}u(x)=-\infty.           
\end{equation}
 Let $B$ be an arbitrary real-valued random variable
with 
\begin{equation}\label{ccc}
Eu(z-B)>-\infty\mbox{ for all }z\in\mathbb{R}.
\end{equation}
Then for all $z\in\mathbb{R}$ there exists a strategy $\phi^*=\phi^*(z)\in\Phi$ such that
$$
Eu(X_T^{z,\phi^*}-B)=\sup_{\phi\in\Phi}Eu(X_T^{z,\phi}-B)=\bar{u}(z).
$$
Furthermore, $\bar{u}$ is continuous on $\mathbb{R}$.
\end{theorem}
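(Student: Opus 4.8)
The plan is to proceed by backward induction on the time horizon $T$, reducing the multistep problem to a sequence of one-step optimisations, and in each one-step problem to extract a convergent subsequence of near-optimal strategies using a compactness argument powered by (NA).

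First I would set up the dynamic programming machinery. Define, for $t=T,T-1,\ldots,0$, the conditional indirect utility $U_t(\omega,x)$ to be the essential supremum over strategies $\phi$ that are started at time $t$ from wealth $x$ of $E[u(x+\sum_{j=t+1}^T\phi_j\Delta S_j - B)\mid\mathcal F_t]$, with $U_T(\omega,x):=u(x-B(\omega))$. The key structural facts I would establish, by downward induction on $t$, are: (i) $U_t(\omega,\cdot)$ is a.s. non-decreasing, concave-free but \emph{continuous} in $x$ and bounded above by $\sup u<\infty$; (ii) $U_t(\omega,x)>-\infty$ for all $x$ — this is where hypothesis \eqref{ccc} together with \eqref{limma} is used, since $u(x-B)\geq$ some integrable lower bound once we also control how negative $x+\sum\phi_j\Delta S_j$ can get; and (iii) the dynamic programming recursion $U_{t}(\omega,x)=\mathrm{ess\,sup}_{\xi\in L^0(\mathcal F_t)} E[U_{t+1}(x+\xi\Delta S_{t+1})\mid\mathcal F_t]$ holds and the essential supremum is attained by some $\mathcal F_t$-measurable $\xi^*=\xi^*_{t+1}(x)$, measurably in $x$. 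Concatenating the attained one-step maximisers yields the optimal $\phi^*$, and $\bar u(z)=U_0(z)$ (recall $\mathcal F_0$ is trivial), whose continuity in $z$ gives the last assertion of the theorem.

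The heart of the matter is the one-step existence claim (iii). Here I would fix $t$, work on the regular conditional probability given $\mathcal F_t$, and consider a maximising sequence $\xi_n$. The standard device (as in Rásonyi--Stettner and Föllmer--Schied) is to split $\Omega$ according to whether $\liminf_n|\xi_n|<\infty$ or $=\infty$; on the latter set one uses (NA) — more precisely its conditional, quantitative form guaranteeing that $|\xi|$ large forces $\xi\Delta S_{t+1}$ to be genuinely negative with non-negligible conditional probability — to derive a contradiction with boundedness above of $u$ combined with \eqref{limma}, so that in fact $\liminf_n|\xi_n|<\infty$ a.s.; then a measurable subsequence extraction (Komlós-type or the Delbaen--Schachermayer lemma on taking $\mathcal F_t$-measurable random subsequences converging to a finite limit) produces $\xi^*$, and upper semicontinuity of $x\mapsto E[U_{t+1}(x+\xi\Delta S_{t+1})\mid\mathcal F_t]$ along with Fatou (applicable since $u$, hence $U_{t+1}$, is bounded above) shows $\xi^*$ is a maximiser. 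The boundedness above is exactly what lets Fatou run in the right direction; this is the structural simplification the introduction advertises over the unbounded-$u$ literature.

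The step I expect to be the main obstacle is controlling the \emph{lower} bound on the conditional expectations throughout the induction — i.e. proving $U_t(\omega,x)>-\infty$ and, more delicately, that $U_t$ is continuous rather than merely upper semicontinuous in $x$. Without concavity one cannot invoke the usual convexity-based continuity arguments, so I would prove continuity by a sandwiching/monotone-limit argument: monotonicity in $x$ gives left and right limits everywhere, and one shows the jump is zero by approximating an optimal strategy for a nearby wealth level and using continuity of $u$ together with dominated/monotone convergence justified by the uniform upper bound and the integrable lower bound from \eqref{ccc}. Propagating the integrable lower bound backwards — each $U_{t}(x)\geq u(x-c|\Delta S_{t+1}|-\cdots)$-type estimate staying integrable — is the technical crux, and I would expect the proof in the paper to isolate it as a lemma.
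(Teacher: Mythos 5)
Your proposal follows essentially the same route as the paper: backward dynamic programming with $U_T(\omega,x)=u(x-B(\omega))$, a quantitative conditional form of (NA) to confine near-optimal one-step strategies to a random compact set, measurable random subsequence extraction plus Fatou (valid precisely because $u$ is bounded above), a monotone-limit argument for continuity of the value functions in the absence of concavity, and backward propagation of integrable lower bounds derived from \eqref{ccc}. The steps you flag as the technical crux (continuity of $U_t$ and the integrable lower bounds) are exactly the ones the paper isolates in its one-step lemmas, so the plan is sound and faithful to the actual argument.
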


Continuity of $u$ is a natural requirement. Theorem
\ref{elso} guarantees that the indirect utility $\bar{u}$ inherits the continuity property of $u$.
Condition \eqref{limma} means that infinite losses lead to an infinite dissatisfaction of the agent.

\begin{remark}\label{beyond} {\rm In Corollary 2.12 of \cite{nc} the existence result 
Theorem \ref{elso} has been obtained 
(in the case $B=0$) 
under the following additional
assumptions: 
\begin{enumerate}

 \item $E(u(x+y\Delta S_t)|\mathcal{F}_{t-1})>-\infty$ for all $x\in\mathbb{R}$, $y\in\mathbb{R}^d$ and $t=1,\ldots,T$;
 $u(0)=0$;

 \item there exist $\underline{x}<0$ and $\underline{\gamma}>0$ such that $u(\underline{x})<0$ and
 for all $x\leq \underline{x}$ and $\lambda\geq 1$, $u(\lambda x)\leq \lambda^{\underline{\gamma}} u(x)$.

\end{enumerate}
While 1. looks rather harmless, 2. forces $u$ to decrease to $-\infty$ at least at the speed
of a power function and hence it excludes the cases where $u(x)$ behaves like e.g. $-\ln(-x)$ near
$-\infty$. This shows that Theorem \ref{elso} indeed extends available results in a significant way.}
\end{remark}

\begin{remark}\label{meaning}
 {\rm Let us take $u$ bounded above, continuously differentiable and define $S_0=0$, $S_1=\pm 1$ with probabilities
 $1/2-1/2$. If $u'(\phi)-u'(-\phi)>0$ for all $\phi>0$ (this is easily seen to imply
 $u(-\infty)>-\infty$) then $\phi\to Eu(\phi\Delta S_1)$ is
 strictly increasing in $|\phi|$ which excludes the existence of an optimiser $\phi^*$ 
 for \eqref{maci}. 
 
 This shows that -- even for very simple specifications of $S_0,S_1$ -- the failure of \eqref{limma}
 may easily prevent the existence of an optimiser, highlighting the importance of condition \eqref{limma}.}
 \end{remark}

Our second main result proves the existence of optimal investments for investors following the principles
of cumulative prospect theory.
We assume that
$u_{\pm}:\mathbb{R}_+\to\mathbb{R}_+$ and $w_{\pm}: [0,1]\to [0,1]$  are
continuous non-decreasing functions such that
$u_{\pm}(0)=0$, $w_{\pm}(0)=0$ and $w_{\pm}(1)=1$.
We fix $B$, a scalar-valued random variable. The agent's utility
function will be $u(x):=u_+(x)$, $x\geq 0$, $u(x):=-u_-(-x)$, $x<0$. We assume $u_+$ bounded above
(resp. $u_-(\infty)=\infty$) which guarantee that $u$ is bounded above
(resp. $u(-\infty)=\infty$). The functions $w_{+}$ (resp. $w_-$) will represent probability
distortions applied to gains (resp. losses) of the investor. For $x\in\mathbb{R}$ we denote
by $x_+$ (resp. by $x_-$) the positive (resp. the negative) part of $x$.

We define, for $\theta\in\Phi$,
\begin{eqnarray*}
V^+(\theta,z):=\int_0^{\infty} w_+\left(P\left(
u_+\left(\left[X^{\theta,z}_T-B\right]_+\right)\geq y
\right)\right)dy,
\end{eqnarray*}
and
\begin{eqnarray}\label{cc}
V^-(\theta,z):=\int_0^{\infty} w_-\left(P\left(
u_-\left(\left[X^{\theta,z}_T-B\right]_-\right)\geq y
\right)\right)dy,
\end{eqnarray}
and set $V(\theta,z):=V^+(\theta,z)-V^-(\theta,z)$.

We aim to find $\theta^*\in\Phi$ with
\begin{equation}\label{problem}
 V(\theta^*,z)=\sup_{\theta\in\Phi}V(\theta,z).
\end{equation}

Note that if $w_{\pm}(p)=p$ (that is, there is no distortion) then we have 
$V(\theta,z)=Eu(X_T^{\phi,z})$, hence problem \eqref{maci} is a subcase of problem \eqref{problem}.

We need the following technical conditions.

\begin{assumption}\label{AAA}
Let $\mathcal{F}_t$ be the $P$-completion of $\sigma(Z_1,\ldots,Z_t)$ for $t=1,
\ldots,T$,
where the $Z_i$, $i=1,\ldots, T$ are $\mathbb{R}^N$-valued
independent random variables. $S_0$ is constant
and $S_1=f_1(Z_1)$, $S_t=f_t(S_1,\ldots,S_{t-1},Z_t)$, $t=2,\ldots,T$ for some continuous functions $f_t$ 
(hence $S_t$ is adapted). We assume that $B=g(S_1,\ldots,S_T)$ for some continuous $g$.

Furthermore, for $t=1,\ldots,T$ there exists an $\mathcal{F}_t$-measurable
uniformly distributed random variable $U_t$ which is independent of $\mathcal{F}_{t-1}\vee \sigma (S_t)$.
\end{assumption}

\begin{remark}
{\rm Stipulating the existence of the ``innovations'' $Z_t$ might look restrictive but it can be weakened 
to $(Z_1,\ldots,Z_T)$ having a nice enough density w.r.t. the respective Lebesgue measure,
see Proposition 6.4 of \cite{cr}. In addition to the continuity conditions, the above Assumption
requires that the information filtration is ``large enough'': at each time $t$, there should exist
some randomness which is independent of both the past ($\mathcal{F}_{t-1}$) and of the present price ($S_t$).
Since real markets are perceived as highly incomplete and noisy, this looks a mild requirement.
See Section 8 of \cite{cr} for models satisfying Assumption \ref{AAA}.}
\end{remark}

The following result will be shown in Section \ref{uma}.
\begin{theorem}\label{masodik} Assume that 
\begin{equation}\label{cccc}
V^-(0,z)>-\infty\mbox{ for all }z\in\mathbb{R}. 
\end{equation}
Under Assumption \ref{AAA} and (NA), for each $z\in\mathbb{R}$  there is $\theta^*=\theta^*(z)\in\Phi$ satisfying
\eqref{problem}.
\end{theorem}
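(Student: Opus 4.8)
The plan is to adapt the compactness machinery of \cite{cr} to the present, weaker assumptions on $u_\pm$. Since $u_+$ is bounded above by $M:=\sup_{x\ge 0}u_+(x)<\infty$ and $w_+\le 1$, one has $V^+(\theta,z)\le M$ for every $\theta\in\Phi$ and every $z$; as $V^-\ge 0$, \eqref{cccc} gives $V(0,z)=V^+(0,z)-V^-(0,z)>-\infty$, so $v:=\sup_{\theta\in\Phi}V(\theta,z)$ is a finite number in $[V(0,z),M]$. Choosing a maximising sequence $\theta^n\in\Phi$ and discarding finitely many terms, we may assume $V(\theta^n,z)\ge v-1$, hence $V^-(\theta^n,z)=V^+(\theta^n,z)-V(\theta^n,z)\le M-v+1=:K$ for all $n$. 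Writing $W^n:=X^{\theta^n,z}_T-B$ and using that $y\mapsto w_-(P(u_-([W^n]_-)\ge y))$ is non-increasing, we get $a\,w_-(P(u_-([W^n]_-)\ge a))\le V^-(\theta^n,z)\le K$ for every $a>0$; since $u_-(\infty)=\infty$, this translates, using the properties of $w_-$, into tightness of the negative parts of $W^n$, equivalently of $X^{\theta^n,z}_T$, $B$ being fixed.

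Next one derives a priori bounds on the strategies from (NA): the usual quantitative form of no-arbitrage yields, for each $t$, an $\mathcal{F}_{t-1}$-measurable $\beta_t>0$ with $P(\xi\,\Delta S_t\le-\beta_t|\xi|\mid\mathcal{F}_{t-1})$ bounded away from $0$ on $\{\xi\ne0\}$ for every $\mathcal{F}_{t-1}$-measurable $\mathbb{R}^d$-valued $\xi$ with values in the affine hull of the conditional support of $\Delta S_t$. Combining this inductively in $t$ with the tightness of the negative parts of $W^n$ established above, one shows that the sequences $(\theta^n_t)_n$ are tight for $t=1,\dots,T$.

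The crucial step is then to pass to a subsequence along which $\mathrm{law}(W^n)$ converges weakly to the law of an \emph{attainable} terminal position. For $T=1$ the strategy is a constant vector, so tightness of $(\theta^n_1)_n$ gives a subsequential limit $\theta^*_1$ and, by continuity of $g$, $W^n\to X^{\theta^*,z}_T-B$ almost surely. For $T\ge2$ the measurable functions $\theta^n_t$ need not lie in a compact set of functions, so one cannot pass to the limit strategy by strategy; instead, exploiting that $S_t$ and $B$ are continuous functions of the innovations $Z_1,\dots,Z_T$ and that each $U_t$ supplies randomness independent of $\mathcal{F}_{t-1}\vee\sigma(S_t)$, one passes to the limit in the joint laws $\mathrm{law}(Z_1,\dots,Z_{t-1},\theta^n_t)$, disintegrates them, and realises the resulting transition kernels through the $U_t$'s so as to construct a genuine predictable $\theta^*\in\Phi$ whose terminal position has the limiting law. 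This construction is the one of \cite{cr}, for which Assumption \ref{AAA} is tailored, and I expect it to be the main obstacle: because $V$ is neither concave nor quasi-concave in $\theta$, the usual Koml\'os/convex-combination device cannot be used to produce a candidate optimiser, so one genuinely needs the structural richness of Assumption \ref{AAA}; compared with \cite{cr} the only change is that the a priori bounds now rest on $u_+$ being bounded above and on $u_-(\infty)=\infty$ rather than on growth conditions.

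It remains to check that $\theta^*$ is optimal, using semicontinuity of $V^\pm$ along $\mathrm{law}(W^n)\to\mathrm{law}(W^*)$, where $W^*:=X^{\theta^*,z}_T-B$. Since $w\mapsto u_+([w]_+)$ is continuous and bounded by $M$, the continuous mapping theorem and the portmanteau lemma give $P(u_+([W^n]_+)\ge y)\to P(u_+([W^*]_+)\ge y)$ for all but countably many $y$; as $w_+$ is continuous and the integrands are bounded by $1$ and supported in $[0,M]$, bounded convergence yields $V^+(\theta^n,z)\to V^+(\theta^*,z)$. For $V^-$, the set $\{w:u_-([w]_-)>y\}$ is open, so the portmanteau lemma gives $P(u_-([W^*]_-)>y)\le\liminf_n P(u_-([W^n]_-)\ge y)$ for every $y$; since the survival functions with $>$ and with $\ge$ agree for all but countably many $y$, and $w_-$ is continuous and non-decreasing, Fatou's lemma gives $V^-(\theta^*,z)\le\liminf_n V^-(\theta^n,z)$. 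Finally $V^-(\theta^n,z)=V^+(\theta^n,z)-V(\theta^n,z)\to V^+(\theta^*,z)-v$, hence $V^-(\theta^*,z)\le V^+(\theta^*,z)-v$, i.e. $V(\theta^*,z)\ge v$; the reverse inequality being trivial, $\theta^*$ solves \eqref{problem}.
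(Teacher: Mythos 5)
Your proposal follows essentially the same route as the paper's proof: bound $V^-$ along a maximising sequence using boundedness of $u_+$ and continuity of $w_-$, deduce tightness of the (projected) strategies by an induction over the time steps based on the quantitative no-arbitrage bound, invoke the construction of Theorem 7.4 of \cite{cr} under Assumption \ref{AAA} to realise the weak limit as an attainable terminal wealth, and conclude by Fatou/portmanteau upper semicontinuity of $V$ along weak convergence. The argument is correct; your final semicontinuity step is even spelled out in slightly more detail than in the paper.
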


\begin{remark}\label{beyond1} {\rm In Theorem 7.4 of \cite{cr} the existence result Theorem \ref{masodik} 
has been obtained 
under the following additional
assumptions: 
\begin{enumerate}

 \item $w_+$ (resp. $w_-$) is dominated by (resp. dominates)  a power function;
 
 \item $u_-$ dominates a power function;
 
 \item integrability conditions on $S_t,1/\beta_t,1/\kappa_t$, see Proposition \ref{karakter} below
 for the definition of $\beta_t,\kappa_t$.
 
 \end{enumerate}
Again, in the case of $u_+$ bounded above, our assumptions are much weaker.}
\end{remark}

\begin{remark}
{\rm In continuous-time models both Theorem \ref{elso} and Theorem \ref{masodik} fail and one needs 
further assumptions
on $u_-,w_-$ in order to get existence, see \cite{rr2}.} 
\end{remark}


\section{Expected utility maximisation -- one-step case}

In this section we consider a function $V:\Omega\times \mathbb{R}\to\mathbb{R}$ such that, for every
$\omega\in\Omega$, the functions $x\to V(\omega,x)$ are non-decreasing, continuous and they
satisfy $\lim_{x\to-\infty}V(\omega,x)=-\infty$ and $V(\omega,x)\leq C$, for all $x\in\mathbb{R}$, 
with some fixed constant $C$. 

We assume that there is a 
sequence of real-valued random variables $M(n)$, $n\in\mathbb{Z}$ such that $V(n)\geq M(n)$ holds a.s.,
for all $n$. Let $\mathcal{H}\subset\mathcal{F}$ be a sigma-algebra. We assume that,
for all $n$, $m(n):=E(M(n)|\mathcal{H})>-\infty$ a.s.

Let $Y$ be an $\mathbb{R}^d$-valued random variable and let
$D(\omega)$ be the affine hull of the support of a regular version of $P(Y\in\cdot|\mathcal{H})(\omega)$.
We can choose $D\in\mathcal{H}\otimes\mathcal{B}(\mathbb{R}^d)$, see Proposition A.1 of \cite{rs}.

The notation $\Xi^n$ will be used for the class of $\mathbb{R}^n$-valued $\mathcal{H}$-measurable
random variables. For $\xi\in\Xi^d$ we will denote by $\hat{\xi}(\omega)$ the projection of $\xi(\omega)$
on $D(\omega)$. The function $\omega\to\hat{\xi}(\omega)$ is $\mathcal{H}$-measurable, see Proposition 4.6
of \cite{rs}. 

\begin{remark}\label{projection} {\rm Notice that $P\left((\xi-\hat{\xi})Y=0|\mathcal{H}\right)=1$ a.s., hence
$P(\xi Y=\hat{\xi} Y)=1$. It means that we may always replace $\xi$ by $\hat{\xi}$ in
the maximisation of $E(V(x+\xi Y)|\mathcal{H})$ in $\xi$, which is the subject of the
present section.}
\end{remark}

We assume that there exist $\mathcal{H}$-measurable $\kappa,\beta>0$ such that for all
$\xi\in\Xi^d$ with $\xi(\omega)\in D(\omega)$ a.s. we have
\begin{equation}\label{nna}
P(\xi Y\leq -\beta |\xi|\,\vert\mathcal{H})\geq \kappa\mbox{ a.s.}
\end{equation}

Following Lemma 4.8 of \cite{rs} and Lemma 3.11 of \cite{nc}, the  next lemma allows
to work with strategies admitting a fixed bound.

\begin{lemma}\label{keyy}
There exists $v:\Omega\times \mathbb{R}\to\mathbb{R}$ such that, for all $x$, $v(\omega,x)$ is a version of
$\mathrm{ess.}\sup_{\xi\in\Xi^d}E(V(x+\xi Y)\vert\mathcal{H})$ and, for every
$\omega\in\Omega$, the functions $x\to v(\omega,x)$ are non-decreasing, right-continuous,
they satisfy $\lim_{x\to-\infty}v(\omega,x)=-\infty$ and $v(\omega,x)\leq C$, for all $x\in\mathbb{R}$.
There exist random variables $K(n)$, $n\in\mathbb{Z}$ such that for all $x\in\mathbb{R}$ and 
$\xi\in\Xi^d$ with $\xi\in D$ a.s.,
\begin{equation}\label{morgiana}
E(V(x+\xi Y)\vert\mathcal{H})\leq E(V(x+ 1_{|\xi|\leq K([x])}\xi  Y)\vert\mathcal{H}),
\end{equation}
where $[x]$ denotes the integer part of $x$. We have $m(n)\leq v(n)$ a.s., for all $n$.
\end{lemma}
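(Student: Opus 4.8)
The plan is to construct $v$ and the bounds $K(n)$ by working first on the countable grid of integers and then extending by monotone right-continuous regularisation. First I would fix $n\in\mathbb{Z}$ and consider the family $\{E(V(n+\xi Y)\mid\mathcal{H}):\xi\in\Xi^d\}$. By Remark \ref{projection} we may restrict attention to $\xi$ with $\xi\in D$ a.s. Because $x\mapsto V(\omega,x)$ is non-decreasing, this family is directed upwards (given $\xi_1,\xi_2$, the random variable $\xi_1 1_A+\xi_2 1_{A^c}$ with $A:=\{E(V(n+\xi_1 Y)\mid\mathcal{H})\geq E(V(n+\xi_2 Y)\mid\mathcal{H})\}\in\mathcal{H}$ dominates both), so the essential supremum $v(n)$ is attained along an increasing sequence $\xi_k(n)$, i.e. $E(V(n+\xi_k(n)Y)\mid\mathcal{H})\uparrow v(n)$ a.s. Since $V\leq C$ we get $v(n)\leq C$, and since $V(n)\geq M(n)$ we get $v(n)\geq m(n)>-\infty$ a.s., giving the last assertion of the lemma.

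Next I would establish the a priori bound \eqref{morgiana}. This is the technical heart and the step I expect to be the main obstacle. Fix $x$ and set $n:=[x]$. Using the non-degeneracy condition \eqref{nna}, on the event $\{|\xi|\geq K\}$ the trade $\xi Y$ is $\leq -\beta K$ with conditional probability at least $\kappa$; combined with $\lim_{y\to-\infty}V(\omega,y)=-\infty$ and the lower bound $V\geq M(n)$ one controls $E(V(x+\xi Y)\mid\mathcal{H})1_{|\xi|\geq K}$ from above. Concretely, on $\{|\xi|\ge K\}$ one has
\[
E(V(x+\xi Y)\mid\mathcal{H})\le \kappa\, V(\omega, x-\beta K)+(1-\kappa)\,C,
\]
using $V\le C$ off the bad set. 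Choosing $K=K(n)$ an $\mathcal{H}$-measurable random variable large enough (pointwise in $\omega$, using right-continuity of $V$ in $\omega$ and measurable selection as in Lemma 4.8 of \cite{rs}) so that the right-hand side drops below $m(n)$, hence below $v(n)$, forces the optimising $\xi$'s to satisfy $|\xi|\le K(n)$ on a set of full conditional measure; truncating at $K(n)$ therefore cannot decrease the conditional expectation, which is exactly \eqref{morgiana}. The argument follows the template of Lemma 3.11 of \cite{nc}; boundedness above of $V$ here makes the tail estimate cleaner than in the unbounded case.

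Finally I would pass from the grid to all $x\in\mathbb{R}$. Define $v(\omega,x):=\lim_{m\to\infty} v(\omega, q_m)$ along rationals $q_m\downarrow x$; monotonicity in $x$ (inherited from monotonicity of $V$ and preserved under essential suprema on the grid) makes this limit well-defined, non-decreasing and right-continuous in $x$, and one checks it is a version of $\mathrm{ess.}\sup_\xi E(V(x+\xi Y)\mid\mathcal{H})$ for every fixed $x$ by a sandwiching argument using $v(\omega,n)\le v(\omega,x)\le v(\omega,n+1)$ for $n=[x]$ together with continuity of $V$ and monotone convergence. The bounds $v\le C$ and $\lim_{x\to-\infty}v(\omega,x)=-\infty$ descend from the corresponding bounds on the grid (for the latter, $v(n)\le\kappa V(\cdot,n-\beta)+(1-\kappa)C$ via \eqref{nna} applied with the trivial competitor, or directly from \eqref{morgiana}), and $m(n)\le v(n)$ was already noted. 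Measurability of $(\omega,x)\mapsto v(\omega,x)$ follows since it is a countable pointwise limit of $\mathcal{H}$-measurable maps.
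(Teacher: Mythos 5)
Your overall architecture (work on the integer grid, obtain a truncation bound, then regularise to a right-continuous version) is the same as the paper's, but the two technical hearts of the argument have genuine gaps. The first is the construction of $K(n)$. Your key inequality
\[
E(V(x+\xi Y)\mid\mathcal{H})\le \kappa\, V(\omega, x-\beta K)+(1-\kappa)\,C \quad\text{on }\{|\xi|\ge K\}
\]
is not valid: $V(\omega,\cdot)$ is a random function that is only $\mathcal{F}$-measurable, so $V(\omega,x-\beta K)$ cannot be pulled out of the conditional expectation with the factor $\kappa$ coming from \eqref{nna}. Worse, even if one had such a bound, choosing $K$ ``pointwise in $\omega$'' so that the right-hand side drops below $m(n)$ would make $K$ depend on the realisation of $V(\omega,\cdot)$, hence $\mathcal{F}$- but not $\mathcal{H}$-measurable, and \eqref{morgiana} requires $K(n)$ to be usable inside $\mathcal{H}$-measurable strategies. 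The paper circumvents exactly this obstruction: since $V(-m)\to-\infty$ a.s., for any $\mathcal{H}$-measurable level $L$ one can find an \emph{$\mathcal{H}$-measurable} threshold $G_L$ with only $P(V(-G_L)\le -L\mid\mathcal{H})\ge 1-\kappa/2$, and then intersect $\{\xi Y\le-\beta|\xi|\}$ (conditional probability $\ge\kappa$) with $\{V(-G_L)\le -L\}$ to retain an event of conditional probability $\ge\kappa/2$ on which $V(n+1-\beta|\xi|)\le -L$ once $\beta|\xi|\ge G_L+n+1$. You need this (or an equivalent) device; without it the $\mathcal{H}$-measurability of $K(n)$ is unproven. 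The same defect infects your argument for $\lim_{x\to-\infty}v(x)=-\infty$: the bound $v(n)\le\kappa V(\cdot,n-\beta)+(1-\kappa)C$ is false already for the competitor $\xi=0$, and \eqref{morgiana} alone does not yield the divergence either; the paper runs the $G_L$ construction a second time with $L(n)=n$ to get a bound $-n\kappa/2+C$ uniform over $\xi$.

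The second gap is the identification of the right-continuous regularisation $v(x):=\inf_{q>x,\,q\in\mathbb{Q}}F(q)$ with a version of $F(x)=\mathrm{ess.}\sup_\xi E(V(x+\xi Y)\mid\mathcal{H})$. The inequality $F(x)\le v(x)$ is trivial from monotonicity, but the reverse inequality is not a ``sandwiching plus monotone convergence'' matter: one must take near-optimisers $\xi_k$ at rationals $q_k\downarrow x$, truncate them via \eqref{morgiana} so that $|\bar\xi_k|\le K([x])+K([x+1])$ uniformly, extract an $\mathcal{H}$-measurable random subsequence converging a.s.\ (Lemma 2 of \cite{ks}), and then apply Fatou's lemma (legitimate because $V\le C$) together with continuity of $V$ to produce a single $\xi'$ with $v(x)\le E(V(x+\xi'Y)\mid\mathcal{H})\le F(x)$. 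Without this compactness step the near-optimal strategies at $q_k$ could a priori fail to have any accumulation point realising the limiting value at $x$, and the claim that $v(x)$ is a version of the essential supremum remains unproven. You have all the ingredients (the truncation, the Kabanov--Stricker lemma is cited elsewhere in the paper), but as written the proposal does not assemble them at the point where they are actually needed.
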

\begin{proof} 
Fix $n\in\mathbb{Z}$.
Since $V(y)\to -\infty$ a.s. when $y\to -\infty$, for each $L\in\Xi^1$ there is $G_L\in\Xi^1$ such that
$P(V(-G_L)\leq -L|\mathcal{H})\geq 1-\kappa/2$ a.s. Since for all $x\in [n,n+1)$,
\begin{eqnarray*}
E(V(x+\xi Y)|\mathcal{H})\leq C+ E(V(n+1-|\xi|\beta)1_{\{\xi Y\leq - |\xi|\beta,\,V(-G_L)\leq -L\}}|\mathcal{H}) 
\end{eqnarray*}
and, by \eqref{nna}, $P(\xi Y\leq -\beta |\xi|\,|\mathcal{H})\geq\kappa$ a.s., 
we get that whenever $\beta |\xi|\geq G_L+n+1$, we have
\begin{eqnarray*}
E(V(x+\xi Y)\vert\mathcal{H})\leq C+ E(V(n+1-|\xi|\beta)1_{\{\xi Y\leq -|\xi|\beta,\,V(-G_L)\leq -L\}}\vert\mathcal{H})\leq
C-L(\kappa/2).
\end{eqnarray*}
Choose $L:=2(C-m(n))/\kappa$, then $K(n):=(G_L+n+1)/\beta$ is such that for $|\xi|\geq K(n)$,
$$
E(V(x+\xi Y)|\mathcal{H})\leq m(n)\leq E(V(x)|\mathcal{H})
$$
holds a.s., providing a suitable function $K(\cdot)$.

Now for each $x$, let $F(x)$ be an arbitrary version of the essential supremum in consideration.
 We may and will assume that $m(n)\leq F(x)\leq C$ for all $\omega$ and $x\in [n,n+1)$, 
 for all $n$.
 Outside a negligible set $N\subset\Omega$, $q\to F(q)$ in non-decreasing on $\mathbb{Q}$.
 Define $v(x):=\inf_{q>x,q\in\mathbb{Q}}F(q)$. This function is non-decreasing and right-continuous outside $N$.  
 
Fix $x\in\mathbb{R}$. Since $F(x)\leq F(q)$ a.s. for $x\leq q$, we clearly have $F(x)\leq v(x)$ a.s. Let us now take 
 $\xi_k\in\Xi^d$ with
 $$
 F(q_k)-1/k\leq E(V(q_k+\xi_k Y)\vert\mathcal{H}), 
 $$
a.s. where $x<q_k<x+1$, $q_k\in\mathbb{Q}$ decreases to $x$ as $k\to\infty$.
Then, by Remark \ref{projection} and \eqref{morgiana},
$$
F(q_k)-1/k\leq E(V(q_k+\bar{\xi}_k Y)|\mathcal{H}),
$$
where $\bar{\xi}_k:=\hat{\xi}_k 1_{|\hat{\xi}_k|\leq K([x])+K([x+1])}$, $k\in\mathbb{N}$.
By Lemma 2 of \cite{ks}, there is an $\mathcal{H}$-measurable random 
subsequence\footnote{That is, $\mathcal{H}$-measurable random variables 
$k_n:\Omega\to\mathbb{N}$, $n\in\mathbb{N}$ with $k_n(\omega)<k_{n+1}(\omega)$ for all $\omega\in\Omega$
and $n\in\mathbb{N}$.} $k_n$, $n\to\infty$ such 
that $\bar{\xi}_{k_n}\to\xi'$ a.s.
with some $\xi'\in\Xi^d$. The Fatou lemma and continuity of $V$ imply that 
$$
v(x)=\lim_{n\to\infty} (F(q_{n_k})-1/n_k)\leq E(V(x+\xi' Y)\vert\mathcal{H})\leq  F(x) 
$$
a.s., showing that $v(x)$ is indeed a version of $F(x)$.

Finally we claim that $v(x)\to-\infty$, $x\to-\infty$ a.s. For each $L(n):=n$
there is $G_{L(n)}\in\Xi^1$ with $P(V(-G_{L(n)})\leq -L(n)|\mathcal{H})\geq 1-\kappa/2$. 
Let us notice that 
\begin{eqnarray*}
E(V(-G_{L(n)}+\xi Y)|\mathcal{H})\leq E(V(-G_{L(n)})1_{\{\xi Y\leq 0,\,V(-G_{L(n)})\leq -L(n)\}}\vert\mathcal{H})+C \leq -n \kappa/2 +C,
\end{eqnarray*}
which is a bound independent of $\xi$ and it tends to $-\infty$ a.s. as $n\to -\infty$,
so $v(-G_{L(n)})\to -\infty$.
By the monotonicity of $v$ this implies that a.s. $\lim_{x\to -\infty}v(x)=-\infty$.
Hence our claim follows. By modifying $v$ on a null set we get
that the claimed properties hold for each $\omega\in\Omega$. The last statement of this Lemma is trivial.
\end{proof}

\begin{lemma}\label{kullogo}
Let $H\in \Xi^1$. Then $v(H)$ is
a version of $\mathrm{ess.}\sup_{\xi\in\Xi^d}E(V(H+\xi Y)\vert\mathcal{H})$.
\end{lemma}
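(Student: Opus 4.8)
The plan is to prove the two inequalities between $v(H)$ and $\psi(H):=\mathrm{ess.}\sup_{\xi\in\Xi^d}E(V(H+\xi Y)\vert\mathcal{H})$ separately, reducing the random argument $H$ to the deterministic statement of Lemma \ref{keyy} by approximating $H$ \emph{from above}. Concretely, set $H_k:=2^{-k}\lceil 2^k H\rceil$; then each $H_k\in\Xi^1$ takes values in the countable set $2^{-k}\mathbb{Z}$, the sequence $H_k$ decreases to $H$ with $0\le H_k-H\le 2^{-k}$, and $[H_k]\in\{[H],[H]+1\}$ for every $\omega$ and every $k$. Monotonicity and right-continuity of $x\mapsto v(\omega,x)$ (Lemma \ref{keyy}) give $v(H_k)\downarrow v(H)$ pointwise, and this is exactly what lets limits pass; approximating from below would be useless since $v(\omega,\cdot)$ is only right-continuous.

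For $\psi(H)\le v(H)$ a.s.: first, for simple (countably-valued) $H_k$ one checks that $v(H_k)$ is a version of $\mathrm{ess.}\sup_{\xi}E(V(H_k+\xi Y)\vert\mathcal{H})$ by gluing the deterministic statement of Lemma \ref{keyy} over the countable $\mathcal{H}$-partition $\{H_k=x_i\}$: on $\{H_k=x_i\}$ the relevant conditional expectations coincide with their deterministic $x=x_i$ counterparts, so pasting an $\mathcal{H}$-measurable choice of near-optimal strategies yields a sequence whose conditional expectations increase a.s.\ to $\sum_i v(x_i)1_{\{H_k=x_i\}}=v(H_k)$. Then, for any fixed $\xi\in\Xi^d$ and any $k$, monotonicity of $V(\omega,\cdot)$ gives $V(H+\xi Y)\le V(H_k+\xi Y)$, hence $E(V(H+\xi Y)\vert\mathcal{H})\le E(V(H_k+\xi Y)\vert\mathcal{H})\le v(H_k)$ a.s.; letting $k\to\infty$ and using $v(H_k)\to v(H)$ yields $E(V(H+\xi Y)\vert\mathcal{H})\le v(H)$ a.s.\ for every $\xi$, i.e.\ $\psi(H)\le v(H)$ a.s.

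For the reverse inequality $v(H)\le\psi(H)$ a.s., I would reproduce the compactness argument from the proof of Lemma \ref{keyy}. Using the simple-case statement above, for each $k$ pick $\xi^k\in\Xi^d$ with $E(V(H_k+\xi^k Y)\vert\mathcal{H})\ge v(H_k)-1/k$ a.s.\ (such $\xi^k$ exists: paste near-optimizers along the $\mathcal{H}$-measurable partition given by the first index at which one gets within $1/k$). Replace $\xi^k$ by its projection $\hat\xi^k\in D$ (Remark \ref{projection}; this leaves the conditional expectation unchanged) and set $\bar\xi^k:=\hat\xi^k 1_{\{|\hat\xi^k|\le K([H_k])\}}$; since $K([H_k])$ is $\mathcal{H}$-measurable, applying \eqref{morgiana} on each atom $\{H_k=x_i\}$ gives $E(V(H_k+\hat\xi^k Y)\vert\mathcal{H})\le E(V(H_k+\bar\xi^k Y)\vert\mathcal{H})$ a.s. Crucially $|\bar\xi^k|\le K([H_k])\le K([H])\vee K([H]+1)=:\tilde K$ for all $k$, an a.s.\ finite $\mathcal{H}$-measurable bound independent of $k$, so by Lemma 2 of \cite{ks} there is an $\mathcal{H}$-measurable random subsequence $k_n$ with $\bar\xi^{k_n}\to\xi'$ a.s.\ for some $\xi'\in\Xi^d$. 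Since also $H_{k_n}\to H$, continuity of $V(\omega,\cdot)$ gives $V(H_{k_n}+\bar\xi^{k_n}Y)\to V(H+\xi' Y)$ a.s., and $V\le C$ allows the reverse Fatou lemma for conditional expectations:
\[
v(H)=\lim_n\big(v(H_{k_n})-1/k_n\big)\le\limsup_n E\big(V(H_{k_n}+\bar\xi^{k_n}Y)\,\big\vert\,\mathcal{H}\big)\le E\big(V(H+\xi' Y)\,\big\vert\,\mathcal{H}\big)\le\psi(H)
\]
a.s., where $v(H_{k_n})\to v(H)$ by right-continuity. Combining the two inequalities and modifying $v(H)$ on a null set finishes the argument.

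The main obstacle is this last block: one needs a truncation bound that is simultaneously admissible in \eqref{morgiana} (which forces it to be $\ge K([H_k])$, varying with $k$) and uniform in $k$ (so that the compactness lemma of \cite{ks} applies) — the elementary observation $[H_k]\in\{[H],[H]+1\}$ is what reconciles the two — and then one must push the deterministic reasoning of Lemma \ref{keyy} through a random argument $H_{k_n}$ and a random subsequence while keeping every manipulation $\mathcal{H}$-measurable. A secondary delicate point, used twice, is the $\mathcal{H}$-measurable pasting that converts a family of $\varepsilon$-optimizers into a single near-optimal strategy for a countably-valued argument.
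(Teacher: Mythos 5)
Your proof is correct and follows essentially the same route as the paper: approximate $H$ from above by countably-valued $\mathcal{H}$-measurable step functions, get one inequality from monotonicity of $V$ and right-continuity of $v$, and the other from the truncation \eqref{morgiana}, the $\mathcal{H}$-measurable random subsequence of \cite{ks}, and Fatou. The only (harmless) cosmetic difference is that the paper first localizes to the events $\{H\in[n,n+1)\}$ so that the single bound $K(n)$ suffices, whereas you work globally with dyadic approximations and use $[H_k]\in\{[H],[H]+1\}$ to get the uniform bound $K([H])\vee K([H]+1)$.
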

\begin{proof} 
Working separately on the events $\{ H\in [n,n+1)\}$ we may and will assume that
$H\in [n,n+1)$ for a fixed $n$. The statement is clearly true for constant $H$ by
Lemma \ref{keyy} and hence also for countable step functions $H$. For general $H$,
let us take step functions $H_k\in [n,n+1)$, $H_k\in\Xi^1$ decreasing to $H$ as $k\to\infty$.
$v(H_k)\to v(H)$ a.s. by right-continuity. It is also clear that, for all $\xi\in\Xi^d$,
$E(V(H+\xi Y)\vert\mathcal{H})\leq E(V(H_k+\xi Y)\vert\mathcal{H})$ a.s. for all $k$,
hence 
$$
\mathrm{ess.}\sup_{\xi\in\Xi^d}E(V(H+\xi Y)\vert\mathcal{H})\leq v(H).
$$
Choose $\xi_k$ with $E(V(H_k+\xi_k Y)\vert\mathcal{H})>v(H_k)-1/k$ and 
note that $\bar{\xi}_k:=\hat{\xi}_k 1_{|\hat{\xi}_k|\leq K(n)}$ also satisfies
$E(V(H_k+\bar{\xi}_k Y)\vert\mathcal{H})>v(H_k)-1/k$ by Remark \ref{projection} and Lemma \ref{keyy}.

By Lemma 2 of \cite{ks} we can take
an $\mathcal{H}$-measurable random subsequence $k_l$, $l\to\infty$ such that
$\bar{\xi}_{k_l}\to \xi^{\dagger}$ a.s., $l\to\infty$. Fatou's lemma and continuity of $V$ imply
$$
v(H)=\lim_{l\to\infty}[v(H_{k_l})-1/k_l]\leq\limsup_{l\to\infty}E(V(H_{k_l}+\bar{\xi}_{k_l} Y)\vert\mathcal{H})\leq
E(V(H+\xi^{\dagger} Y)\vert\mathcal{H}),
$$
completing the proof.
\end{proof} 

\begin{lemma}\label{qef}
Outside a negligible set, the trajectories $x\to v(x,\omega)$ are continuous.
\end{lemma}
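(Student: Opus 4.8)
The goal is to upgrade the right-continuity of $x\mapsto v(\omega,x)$, already established in Lemma~\ref{keyy}, to full continuity. Since $v$ is non-decreasing and right-continuous outside a null set, it suffices to show that it is also left-continuous, i.e.\ that $v(x-):=\lim_{q\uparrow x}v(q)$ equals $v(x)$ a.s.\ for each fixed $x\in\mathbb{R}$; the monotonicity will then let us pass to a common null set (using a countable dense set of $x$ and monotonicity) so that the trajectories are continuous for every $\omega$ outside one negligible set. Clearly $v(x-)\le v(x)$ a.s.\ by monotonicity, so the real content is the reverse inequality $v(x)\le v(x-)$ a.s.

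For the reverse inequality I would imitate the approximation argument of Lemma~\ref{keyy} and Lemma~\ref{kullogo}, but approaching $x$ \emph{from the left}. Fix $x$ and pick rationals $q_k\uparrow x$ with $x-1<q_k$. Using the definition of $v$ as an essential supremum, choose $\xi_k\in\Xi^d$ with $E(V(x+\xi_k Y)\mid\mathcal{H})>v(x)-1/k$ a.s.; by Remark~\ref{projection} and the truncation bound \eqref{morgiana} of Lemma~\ref{keyy} we may replace $\xi_k$ by $\bar\xi_k:=\hat\xi_k\,1_{|\hat\xi_k|\le K([x-1])+K([x])}$ without decreasing the conditional expectation. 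Now apply Lemma~2 of \cite{ks} to extract an $\mathcal{H}$-measurable random subsequence $k_l$ with $\bar\xi_{k_l}\to\xi^\circ$ a.s.\ for some $\xi^\circ\in\Xi^d$. Since $q_{k_l}+\bar\xi_{k_l}Y\to x+\xi^\circ Y$ a.s., continuity of $V$ together with the uniform upper bound $V\le C$ and the Fatou lemma (applied to $C-V$, whose conditional expectation is finite because $E(V(x+\bar\xi_{k_l}Y)\mid\mathcal{H})\ge v(x)-1>-\infty$) give
$$
v(x)=\lim_{l\to\infty}\bigl[v(x)-1/k_l\bigr]\le\limsup_{l\to\infty}E\bigl(V(q_{k_l}+\bar\xi_{k_l}Y)\mid\mathcal{H}\bigr)\le E\bigl(V(x+\xi^\circ Y)\mid\mathcal{H}\bigr).
$$
Wait — this bounds $v(x)$ by something at level $x$, not $x-$; so instead one should keep the argument at the shifted point: since $E(V(q_{k_l}+\bar\xi_{k_l}Y)\mid\mathcal{H})\le v(q_{k_l})\le v(x-)$, the chain above actually reads $v(x)\le v(x-)$, which is exactly what we want once we replace the last term by $\limsup_l v(q_{k_l})$.

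The main obstacle is ensuring that Fatou's lemma applies in the conditional form along the random subsequence: one needs a uniform (in $l$) conditionally integrable lower bound for $V(q_{k_l}+\bar\xi_{k_l}Y)$, equivalently an upper bound for $(V(q_{k_l}+\bar\xi_{k_l}Y))_-$. This is where the truncation $|\bar\xi_{k_l}|\le K([x-1])+K([x])=:K$ is essential: on the event $\{V(-G)\le -L\}$ one controls $V$ from below via the argument in Lemma~\ref{keyy}, and combined with $\beta,\kappa>0$ and $m([x-1])>-\infty$ one gets a conditionally integrable minorant $M'$ with $E(M'\mid\mathcal{H})>-\infty$ valid simultaneously for all the truncated strategies. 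A secondary, purely bookkeeping, obstacle is the passage from "for each fixed $x$, a.s." to "a.s., for all $x$ simultaneously": handle it by establishing left-continuity at every rational $x$ on a single null set, then use that a non-decreasing, right-continuous function which is left-continuous at all rationals is automatically continuous everywhere.
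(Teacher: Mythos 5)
Your plan founders at the globalization step, and this is not mere bookkeeping: it is the whole difficulty of the lemma. Establishing $v(x-)=v(x)$ a.s.\ for each \emph{fixed} $x$, even for all rationals simultaneously, does not give a.s.\ continuity of the trajectories. First, your stated principle is false: a non-decreasing, right-continuous function that is left-continuous at every rational need not be continuous everywhere --- take $x\mapsto 1_{\{x\geq\sqrt{2}\}}$, whose only jump sits at an irrational point. Second, and more to the point, the location of a jump of $v(\cdot,\omega)$ may be random: if $Z$ has a continuous law, then $x\mapsto 1_{\{x\geq Z(\omega)\}}$ is, for each fixed $x$, continuous at $x$ for a.e.\ $\omega$, yet almost every trajectory has a jump. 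The paper circumvents exactly this by arguing on the random set of $\varepsilon$-jumps: it forms $B:=\{(x,\omega):v(x,\omega)>\varepsilon+\sup_{q<x,q\in\mathbb{Q}}v(q,\omega)\}$, assumes its projection $A$ onto $\Omega$ has positive probability, takes a measurable selector $H$ of $B$ on $A$, and derives a contradiction at the random point $H$ --- which is why Lemma \ref{kullogo} (identifying $v(H)$ with the essential supremum at an $\mathcal{H}$-measurable $H$) was proved beforehand. Some device of this kind is indispensable here; checking fixed (or rational) $x$ one at a time cannot work.

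There is also a direction-of-inequality problem in your core chain, which you notice (``Wait'') but do not actually repair. Your $\xi_k$ are near-optimal at level $x$, i.e.\ $v(x)-1/k\leq E(V(x+\xi_kY)\mid\mathcal{H})$; since $q_{k_l}<x$ and $V$ is non-decreasing, this yields no lower bound on $E(V(q_{k_l}+\bar\xi_{k_l}Y)\mid\mathcal{H})$, so the first inequality of your display --- and of your amended chain ending in $v(x-)$ --- is unjustified. The compactness extraction via Lemma 2 of \cite{ks} is moreover unnecessary for left-continuity: for a \emph{fixed} near-optimal $\zeta_l$ at level $x$, monotone convergence along $q_k\uparrow x$ gives $v(q_k)\geq E(V(q_k+\zeta_lY)\mid\mathcal{H})\to E(V(x+\zeta_lY)\mid\mathcal{H})\geq v(x)-1/l$, whence $v(x-)\geq v(x)$. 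This is precisely the paper's argument, except that it must be (and in the paper is) run at the measurably selected random point $H$ rather than at a fixed $x$.
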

\begin{proof}
Arguing by contradiction, let us suppose that the projection $A\in\mathcal{H}$ of the set
$$
B:=\{(x,\omega):v(x,\omega)>\varepsilon +\sup_{q<x,q\in\mathbb{Q}} v(q,\omega)\}\in\mathcal{B}(\mathbb{R})\otimes\mathcal{H}
$$
on $\Omega$ has positive probability for some $\varepsilon>0$. 
Let $H$ be a measurable selector of $B$ on $A$ and let it be $0$ outside $A$. 
Let $H>H_k$ be rational-valued $\mathcal{H}$-measurable step functions
increasing to $H$ and choose $\zeta_l$ such that $v(H)-1/l\leq E(V(H+\zeta_l Y)|\mathcal{H})$
for each $l\in\mathbb{N}$. On $A$ we have 
$$
\limsup_{k\to\infty}E(V(H_k+\zeta_l Y)|\mathcal{H})\leq
\limsup_{k\to\infty}v(H_k)\leq v(H)-\varepsilon.
$$
On the other hand, monotone convergence ensures $\lim_{k\to\infty} E(V(H_k+\zeta_l Y)|\mathcal{H})=
E(V(H+\zeta_l Y)|\mathcal{H})\geq v(H)-1/l$, for all $l$. Tending $l\to\infty$ leads to a contradiction.
\end{proof}


By modifying $v(\cdot,\omega)$ on a negligable set we may and will assume that $v(\cdot,\omega)$
is continuous for all $\omega\in\Omega$.

\begin{lemma}\label{farosz} For each real-valued $\mathcal{H}$-measurable random variable $H$
there exists $\mathcal{H}$-measurable $\xi^*_H:\Omega\to\mathbb{R}^d$ such that
$E(V(H+\xi^*_H Y)\vert\mathcal{H})=v(H)$ a.s.
 \end{lemma}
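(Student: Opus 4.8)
The plan is to imitate the compactness machinery already set up in Lemmas \ref{keyy}--\ref{kullogo}: extract a maximising sequence of strategies, cut it down to the thresholds $K(n)$ so that it becomes uniformly bounded, pull out an $\mathcal{H}$-measurable almost sure limit, and pass to the limit by a conditional Fatou lemma, exploiting that $C-V(\cdot,\cdot)\geq 0$ and that $V(\omega,\cdot)$ is continuous.

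First I would localise on the integer windows $A_n:=\{H\in[n,n+1)\}\in\mathcal{H}$, $n\in\mathbb{Z}$. These partition $\Omega$, a countable gluing of $\mathcal{H}$-measurable random variables is again $\mathcal{H}$-measurable, and the claimed identity restricted to $A_n$ only involves what happens on $A_n$; so it suffices to construct $\xi^*_H$ on each $A_n$, and we may assume $H\in[n,n+1)$ for a fixed $n$. By Lemma \ref{kullogo}, $v(H)$ is a version of $\mathrm{ess.}\sup_{\xi\in\Xi^d}E(V(H+\xi Y)\vert\mathcal{H})$; in particular $v(H)\geq v(n)\geq m(n)>-\infty$ a.s. on $A_n$, and I may pick $\xi_k\in\Xi^d$ with $E(V(H+\xi_k Y)\vert\mathcal{H})>v(H)-1/k$ a.s. for all $k$.

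Next, replacing $\xi_k$ by its projection $\hat{\xi}_k$ onto $D$ (Remark \ref{projection}) and setting $\bar{\xi}_k:=\hat{\xi}_k 1_{\{|\hat{\xi}_k|\leq K(n)\}}$, one still has $E(V(H+\bar{\xi}_k Y)\vert\mathcal{H})>v(H)-1/k$ a.s.; this is exactly the truncation step carried out inside the proof of Lemma \ref{kullogo}, where \eqref{morgiana} (stated for a constant argument) is first applied on the level sets of a countably-valued $H'_m\downarrow H$ — on which $K([\cdot])=K(n)$ — and then one lets $m\to\infty$ using monotone convergence for $C-V(\cdot)\geq 0$. Since $|\bar{\xi}_k|\leq K(n)$ for all $k$, Lemma 2 of \cite{ks} furnishes an $\mathcal{H}$-measurable random subsequence $(k_l)_{l\geq 1}$ and $\xi^*\in\Xi^d$ with $\bar{\xi}_{k_l}\to\xi^*$ a.s. By continuity of $V$, $0\leq C-V(H+\bar{\xi}_{k_l}Y)\to C-V(H+\xi^*Y)$ a.s., so the conditional Fatou lemma yields
$$
E(C-V(H+\xi^*Y)\vert\mathcal{H})\leq\liminf_{l\to\infty}E(C-V(H+\bar{\xi}_{k_l}Y)\vert\mathcal{H})\quad\text{a.s.},
$$
i.e. $E(V(H+\xi^*Y)\vert\mathcal{H})\geq\limsup_{l\to\infty}E(V(H+\bar{\xi}_{k_l}Y)\vert\mathcal{H})\geq v(H)$ a.s. The opposite inequality $E(V(H+\xi^*Y)\vert\mathcal{H})\leq v(H)$ is immediate from Lemma \ref{kullogo}, hence $\xi^*_H:=\xi^*$ does the job on $A_n$, and gluing over $n\in\mathbb{Z}$ completes the proof.

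The only point that needs a little care — and the reason I would quote the proof of Lemma \ref{kullogo} rather than \eqref{morgiana} directly — is that \eqref{morgiana} is formulated for a deterministic level $x$, so the truncation at height $K(n)$ for the genuinely random $H$ has to be obtained via the monotone approximation of $H$ from above indicated above. Everything else is the routine random-subsequence extraction plus conditional Fatou, so I do not expect any real obstacle.
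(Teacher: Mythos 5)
Your proposal is correct and follows essentially the same route as the paper's proof: take a maximising sequence, project onto $D$ and truncate at the threshold $K(n)$ (the paper's $K([H])$), extract an $\mathcal{H}$-measurable random subsequence via Lemma 2 of \cite{ks}, and apply conditional Fatou to $C-V\geq 0$, with the reverse inequality coming from Lemma \ref{kullogo}. The only differences are presentational: you localise explicitly on $\{H\in[n,n+1)\}$ and spell out why \eqref{morgiana} extends from deterministic $x$ to the random argument $H$, a point the paper passes over without comment.
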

\begin{proof} The set $E(V(H+\xi Y)|\mathcal{H})$, $\xi\in\Xi^d$ is directed upwards hence
there is a sequence $\xi_k\in\Xi^d$ such that 
$E(V(H+\xi_k Y)|\mathcal{H})$ increases to $\mathrm{ess.}\sup_{\xi\in\Xi^d}E(V(H+\xi Y)|\mathcal{H})=v(H)$,
recall Lemma \ref{kullogo}. 
Define the random variable $K([H]):=\sum_{n\in\mathbb{Z}}K(n)1_{H\in [n,n+1)}$.
By Remark \ref{projection} and Lemma \ref{keyy}, for $\bar{\xi}_k:=\hat{\xi}_k 1_{|\hat{\xi}_k|\leq K([H])}$,
$$
E(V(H+\xi_k Y)|\mathcal{H})\leq E(V(H+\bar{\xi}_k  Y)\vert\mathcal{H})
$$
a.s. for all $k$. Then an $\mathcal{H}$-measurable 
random subsequence $k_n$ exists such that $\bar{\xi}_{k_n}\to\xi^*$ a.s., $n\to\infty$
for some $\xi^*\in\Xi^d$, and Fatou's lemma
guarantees that
$$
E(V(H+\xi^* Y)|\mathcal{H})\geq \limsup_{n\to\infty}E(V(x+\bar{\xi}_{k_n}  Y)\vert\mathcal{H})\geq
\lim_{n\to\infty} E(V(x+\xi_n Y)|\mathcal{H})=v(H),
$$
hence $\xi^*_H:=\xi^*$ is as required.
\end{proof}
  
\section{The multi-step case}\label{multi}

We first recall a useful consequence of (NA). Let $D_t(\omega)$ be the affine hull of the support of a regular version of
$P(\Delta S_{t+1}\in\cdot |\mathcal{F}_t)(\omega)$. Let $\Xi_t^n$ denote the class of $\mathcal{F}_t$-measurable
$n$ dimensional random variables.
\begin{proposition}\label{karakter}
(NA) implies the existence of 
$\beta_t,\kappa_t\in\Xi^1_t$ with $\beta_t,\kappa_t>0$ a.s.
such that for  $\xi \in {\Xi}_t^d$ with  $\xi \in D_{t}$ a.s.:
\begin{equation}\label{valaki}
P( \xi \Delta S_{t+1} \leq -\beta_t |\xi| \vert\mathcal{F}_t)\geq \kappa_t\quad\mbox{ on }\{D_{t}\neq \{0\}\}
\end{equation}
holds almost surely; for all $0\leq t\leq T-1$.
\end{proposition}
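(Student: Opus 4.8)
The plan is to localise the classical no-arbitrage quantitative estimate to the conditional setting, working $\omega$ by $\omega$ on the support of the regular conditional distribution of $\Delta S_{t+1}$ given $\mathcal F_t$. First I would recall the one-step deterministic fact: if a probability measure $\mu$ on $\mathbb R^d$ admits no arbitrage in the sense that there is no $\xi$ with $\xi x\geq 0$ $\mu$-a.e.\ and $\xi x>0$ on a set of positive $\mu$-measure, then for each $\xi$ in the affine hull $D$ of $\mathrm{supp}\,\mu$ there is some mass of $\mu$ strictly below $0$ in the direction $\xi$; a compactness argument over the unit sphere of $D$ then yields constants $\beta,\kappa>0$ with $\mu(\{x: \xi x\leq -\beta|\xi|\})\geq \kappa$ for all $\xi\in D$. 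The subtlety is that these constants depend on $\mu$, hence on $\omega$, so they must be produced in an $\mathcal F_t$-measurable way.

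Next I would invoke the measurable selection / regular conditional distribution machinery already cited in the paper (Proposition A.1 and Lemma 4.8 of \cite{rs}, together with the structure in \cite{nc}): choose a regular version $\omega\mapsto Q_t(\omega,\cdot)$ of $P(\Delta S_{t+1}\in\cdot\,|\mathcal F_t)$, note that $\omega\mapsto D_t(\omega)$ is $\mathcal F_t$-measurable with values in the affine Grassmannian, and that (NA) forces $Q_t(\omega,\cdot)$ to be arbitrage-free for $P$-a.e.\ $\omega$ (otherwise a measurable selector of an arbitrage direction would produce a genuine arbitrage $\phi\in\Phi$, contradicting (NA)). Then define, for $\omega$ with $D_t(\omega)\neq\{0\}$,
\begin{equation*}
\kappa_t(\omega):=\inf_{\substack{\xi\in D_t(\omega)\\ |\xi|=1}} Q_t\bigl(\omega,\{x:\xi x\leq -\beta_t(\omega)|\xi|\}\bigr),
\end{equation*}
after first having defined $\beta_t(\omega)$ by an analogous infimum/supremum; the compactness of the unit sphere of $D_t(\omega)$ and the arbitrage-freeness of $Q_t(\omega,\cdot)$ guarantee $\beta_t(\omega),\kappa_t(\omega)>0$. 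On $\{D_t=\{0\}\}$ set $\beta_t=\kappa_t=1$, say. Measurability of these infima in $\omega$ follows because the integrands are jointly measurable and one may restrict the infimum to a countable dense subset of the (measurably varying) sphere; this is exactly the type of argument carried out in \cite{rs,nc}, so I would cite it rather than reproduce it.

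Finally, to pass from the deterministic estimate for $Q_t(\omega,\cdot)$ to the conditional-probability statement \eqref{valaki}, I would note that for any fixed $\xi\in\Xi_t^d$ with $\xi\in D_t$ a.s., the map $\omega\mapsto Q_t(\omega,\{x:\xi(\omega)x\leq -\beta_t(\omega)|\xi(\omega)|\})$ is a version of $P(\xi\Delta S_{t+1}\leq -\beta_t|\xi|\,|\mathcal F_t)$ (homogeneity in $\xi$ lets one reduce to $|\xi|=1$ where defined), and by construction this is $\geq\kappa_t$ on $\{D_t\neq\{0\}\}$. The main obstacle is the measurability bookkeeping: ensuring that the fibrewise compactness argument can be performed simultaneously for all $\omega$ in an $\mathcal F_t$-measurable fashion, so that $\beta_t,\kappa_t$ are genuinely $\mathcal F_t$-measurable and strictly positive a.s. Since this is precisely the content of Lemma 4.8 of \cite{rs} (and its refinement in \cite{nc}), I expect the proof to consist mainly of quoting that lemma and checking that (NA) supplies its hypothesis fibrewise.
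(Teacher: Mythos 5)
The paper disposes of this proposition with a one-line citation of Proposition 3.3 of \cite{rs}, and your plan is essentially a sketch of how that cited result is actually proved (fibrewise quantitative no-arbitrage for the regular conditional law, plus measurable-selection bookkeeping), so the two approaches coincide. Two small caveats: the relevant statement in \cite{rs} is Proposition 3.3 rather than Lemma 4.8 (which concerns the projection $\hat\xi$ and strategy bounds); and defining $\kappa_t(\omega)$ as the infimum over the unit sphere of $D_t(\omega)$ of $\xi\mapsto Q_t(\omega,\{x:\xi x\le -\beta_t(\omega)\})$ does not by itself yield positivity, since that map is only upper semicontinuous in $\xi$ and a positive upper semicontinuous function on a compact set can have infimum zero --- the standard argument instead shrinks $\beta$ and $\kappa$ together: if unit vectors $\xi_n\in D_t(\omega)$ satisfied $Q_t(\omega,\{x:\xi_n x\le -1/n\})<1/n$, a limit point $\xi$ of $(\xi_n)$ would give $Q_t(\omega,\{x:\xi x\ge 0\})=1$ with $\xi\in D_t(\omega)\setminus\{0\}$, contradicting the fibrewise no-arbitrage supplied by (NA).
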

\begin{proof} Trivial from Proposition 3.3 of \cite{rs}.
\end{proof}

\begin{remark}\label{pop}
{\rm Note that if $Q\sim P$ then \eqref{valaki} implies that
$$
Q( \xi \Delta S_{t+1} \leq -\beta_t |\xi| \vert\mathcal{F}_t)\geq \kappa^Q_t\quad\mbox{ on }\{D_{t}\neq \{0\}\}
$$
for some $\kappa^Q_t>0$ a.s.}
\end{remark}

In this section all the assumptions of Theorem \ref{elso} will be in force.
Set $U_T(x,\omega):=u(x-B(\omega))$ for $(\omega,x)\in \Omega\times\mathbb{R}^d$.

\begin{lemma}\label{potty}
For $t=0,\ldots,T-1$, there exist 
$U_t:\Omega\times\mathbb{R}\to\mathbb{R}$ such that, for all $x$, $U_t(\omega,x)$ is a version of
$\mathrm{ess.}\sup_{\xi\in\Xi^d}E(U_{t+1}(x+\xi \Delta S_{t+1})\vert\mathcal{F}_t)$ and, for every
$\omega\in\Omega$, the functions $x\to U_t(\omega,x)$ are non-decreasing, continuous,
they satisfy $\lim_{x\to-\infty}U_t(\omega,x)=-\infty$ and $U_t(\omega,x)\leq C$, for all $x\in\mathbb{R}$,
with some constant $C$.
For each $\mathcal{F}_t$-measurable $H_t$ there exists
$\tilde{\xi}_t(H_t)\in\Xi^d_t$ such that
$$
E(U_{t+1}(H_t+\tilde{\xi}_t(H_t) \Delta S_{t+1})\vert\mathcal{F}_t)=
\mathrm{ess.}\sup_{\xi\in\Xi^d}E(U_{t+1}(H_t+\xi \Delta S_{t+1})\vert\mathcal{F}_t).
$$
\end{lemma}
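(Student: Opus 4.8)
The plan is to prove the statement by backward induction on $t=T-1,T-2,\ldots,0$, applying at each stage the one-step results of Section 2 (Lemmas \ref{keyy}, \ref{qef}, \ref{kullogo} and \ref{farosz}) with the choices $\mathcal{H}:=\mathcal{F}_t$, $Y:=\Delta S_{t+1}$ and $V:=U_{t+1}$. To make the induction close, I would carry along the hypothesis that, for each $x$, $\omega\mapsto U_{t+1}(\omega,x)$ is $\mathcal{F}_{t+1}$-measurable, that $x\mapsto U_{t+1}(\omega,x)$ is non-decreasing, continuous, tends to $-\infty$ as $x\to-\infty$ and is bounded above by $C:=\sup_{y\in\mathbb{R}}u(y)<\infty$, together with the bookkeeping inequality $U_{t+1}(n)\ge E(u(n-B)\,\vert\,\mathcal{F}_{t+1})$ a.s.\ for all $n\in\mathbb{Z}$. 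The base case $t=T-1$ is immediate from the definition $U_T(\omega,x)=u(x-B(\omega))$: monotonicity, continuity, the limit \eqref{limma} and the bound $C$ are inherited from $u$, and $U_T(n)=u(n-B)=E(u(n-B)\,\vert\,\mathcal{F}_T)$ trivially.

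For the inductive step I first have to verify that the standing assumptions of Section 2 hold at level $t$. The condition \eqref{nna} is supplied by Proposition \ref{karakter} with $\kappa:=\kappa_t$, $\beta:=\beta_t$: it is exactly \eqref{valaki} on $\{D_t\neq\{0\}\}$, while on $\{D_t=\{0\}\}$ any $\xi\in\Xi^d_t$ taking values in $D_t$ vanishes, so the inequality there reads $0\le0$ with conditional probability $1\ge\kappa_t$; moreover, by Remark \ref{projection} the essential supremum over all $\xi\in\Xi^d_t$ equals the one over $\xi\in D_t$, so restricting to $D_t$ costs nothing. The minorant required by Lemma \ref{keyy} I would take to be $M(n):=E(u(n-B)\,\vert\,\mathcal{F}_{t+1})$: the inductive hypothesis gives $U_{t+1}(n)\ge M(n)$ a.s., and $m(n):=E(M(n)\,\vert\,\mathcal{F}_t)=E(u(n-B)\,\vert\,\mathcal{F}_t)>-\infty$ a.s.\ by \eqref{ccc} and the tower rule. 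With these inputs, Lemma \ref{keyy} yields $U_t:=v$, a version of the relevant essential supremum, which is non-decreasing, right-continuous, tends to $-\infty$ and is bounded by $C$; Lemma \ref{qef} then upgrades right-continuity to continuity off a null set, and after the modification of $U_t(\cdot,\omega)$ on that set (as performed just before Lemma \ref{farosz}) the trajectories are continuous for every $\omega$. Finally Lemma \ref{kullogo} identifies $U_t(H_t)$ as a version of $\mathrm{ess.}\sup_{\xi\in\Xi^d_t}E(U_{t+1}(H_t+\xi\Delta S_{t+1})\,\vert\,\mathcal{F}_t)$ for any $\mathcal{F}_t$-measurable $H_t$, and Lemma \ref{farosz} produces the $\mathcal{F}_t$-measurable maximiser $\tilde\xi_t(H_t)$ attaining it, which is the last claim of the Lemma at level $t$.

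To close the induction I would then check that $U_t$ again satisfies the carried hypothesis: measurability, monotonicity, continuity and the bound $C$ have just been obtained, $U_t$ is genuinely real-valued since $U_t\ge E(u(\cdot-B)\,\vert\,\mathcal{F}_t)>-\infty$, and the bookkeeping inequality follows by taking $\xi\equiv0$ in the essential supremum: $U_t(n)\ge E(U_{t+1}(n)\,\vert\,\mathcal{F}_t)\ge E\bigl(E(u(n-B)\,\vert\,\mathcal{F}_{t+1})\,\vert\,\mathcal{F}_t\bigr)=E(u(n-B)\,\vert\,\mathcal{F}_t)$. Iterating down to $t=0$ finishes the proof. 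The genuinely delicate analysis — the conditional compactness via the measurable subsequence of \cite{ks}, the Fatou-type passages to the limit, the preservation of continuity and of finiteness of the value — has all been carried out in Section 2, so here the only real work is the bookkeeping: choosing the conditional minorant $M(n)$ at each stage so that simultaneously $U_{t+1}(n)\ge M(n)$ and $E(M(n)\,\vert\,\mathcal{F}_t)>-\infty$, and not forgetting to verify that \eqref{nna} is non-vacuous on the degenerate set $\{D_t=\{0\}\}$. I expect this last, essentially clerical, check to be the only place where something could slip.
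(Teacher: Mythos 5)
Your proposal is correct and follows essentially the same route as the paper: backward induction, applying Lemmata \ref{keyy}, \ref{qef}, \ref{kullogo} and \ref{farosz} at each stage with $\mathcal{H}=\mathcal{F}_t$, $Y=\Delta S_{t+1}$, $V=U_{t+1}$, and carrying the minorant $M_t(n)=E(u(n-B)\,\vert\,\mathcal{F}_t)$, whose a.s.\ finiteness comes from \eqref{ccc} via the tower rule, exactly as in the paper. Your explicit check of \eqref{nna} on the degenerate set $\{D_t=\{0\}\}$ is a detail the paper leaves implicit, but it is handled correctly.
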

\begin{proof}
 Proceeding by backward induction, we will show the statements of this Lemma together with the
 existence of $M_t(n)\leq U_t(n)$ with $EM_t(n)>-\infty$. 
 First apply Lemmata \ref{keyy}, \ref{qef} and Proposition \ref{farosz}  to $Y:=\Delta S_T$, $V:=U_T$,
 $D:=D_{T-1}$ and $M(n)=M_T(n):=U_T(n)$  and we get the statements for $T-1$. 
 Note that (NA) implies \eqref{nna} for $Y$ by Proposition \ref{karakter}. Also, $M_{T-1}(n):=E(U_T(n)|\mathcal{F}_{T-1})$
 satisfies $EM_{T-1}(n)>-\infty$ by \eqref{ccc}. 
 
 Assume that this Lemma has been shown for $t+1$.
 Apply Lemmata \ref{keyy}, \ref{qef} and \ref{farosz} with the choice $Y:=\Delta S_{t+1}$,
 $V:=U_{t+1}$, $D:=D_{t}$ and $M(n):=M_{t+1}(n)$ to get the statements for $t$, noting that for 
 $m(n)=M_t(n):=E(M_{t+1}(n)|\mathcal{F}_t)=E(U_T(n)|\mathcal{F}_{t})$, $Em(n)>-\infty$.
\end{proof}

\noindent{\em Proof of Theorem \ref{elso}.}
 Using the previous Lemma, define recursively $\phi^*_1:=\tilde{\xi}_1(z)$ and
 $\phi^*_{t+1}:=\tilde{\xi}_{t+1}(X^{\phi^*,z}_t)$. For any $\phi\in\Phi$:
 \begin{eqnarray*}
 Eu(X_T^{\phi,z}-G) &=& EE(U_T(X_{T-1}^{\phi,z}+\phi_T\Delta S_T)|\mathcal{F}_{T-1})\leq 
 EU_{T-1}(X_{T-1}^{\phi,z})\\ &\leq& \ldots\leq EU_0(z)
 \end{eqnarray*}
by repeated applications of Lemma \ref{kullogo} for $v=U_t$, $t=T-1,\ldots,0$.
Notice that there are equalities everywhere for $\phi^*$. This finishes the proof
except the continuity of $\bar{u}$. Note that $U_0$ is continuous outside a negligible set
by Lemma \ref{potty}. Clearly, $\bar{u}(x)=EU_0(x)$. Let $x_n$ converge to $x$ and let $k\in\mathbb{Z}$ 
be such that $k\leq\inf_n x_n$. Then $U_0(x_n)\to U(x)$
a.s. and $M_0(k)\leq U_0(x_n)\leq C$ where $C$ is an upper bound for $u$. Since $M_0(k)$ is integrable,
dominated convergence finishes the proof.\hfill $\Box$

\section{Utility maximisation under cumulative prospect theory}\label{uma}

We begin with a variant of Lemma A.1 in \cite{cr}. Let us define 
$$
\hat{\Phi}:=\{\theta\in\Phi:
\theta_t\in D_t\mbox{ a.s.},\ t=1,\ldots,T\}.
$$

\begin{lemma}\label{makk} For all $t=0,\ldots,T-1$,
there exist $\pi_t\in\Xi^1_t$, ${\pi}_t>0$ a.s., $t=0,\ldots,T-1$,
such that, for all $\theta\in\hat{\Phi}$, 
\[
P(\theta_{t+1} \Delta S_{t+1}\leq -\kappa_t \vert \theta_{t+1}\vert,\
\theta_n\Delta S_n\leq 0,\, n=t+2, \ldots, T| \mathcal{F}_t)\geq{\pi}_t
\quad\mbox{ on }\{D_{t}\neq \{0\}\}.
\]
\end{lemma}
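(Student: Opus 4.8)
The plan is to prove Lemma~\ref{makk} by backward induction on $t$, running from $t = T-1$ down to $t = 0$, at each stage producing the function $\pi_t \in \Xi^1_t$ from the already-available quantities $\kappa_s, \beta_s$ of Proposition~\ref{karakter} and the inductively constructed $\pi_{t+1}$. The base case $t = T-1$ is essentially Proposition~\ref{karakter} itself: for $\theta \in \hat\Phi$ we have $\theta_T \in D_{T-1}$ a.s., so \eqref{valaki} gives $P(\theta_T \Delta S_T \leq -\beta_{T-1}|\theta_T|\,|\,\mathcal{F}_{T-1}) \geq \kappa_{T-1}$ on $\{D_{T-1} \neq \{0\}\}$; since the conjunction over $n = T+1, \ldots, T$ is vacuous, one may take $\pi_{T-1} := \kappa_{T-1}$ (after replacing $\beta_{T-1}$ by $\min(\beta_{T-1},1)$, or simply noting the statement is phrased with $\kappa_t$ in place of $\beta_t$, so a harmless relabelling/shrinking of the constants is needed to match the displayed inequality).

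For the inductive step, suppose $\pi_{t+1} \in \Xi^1_{t+1}$ has been constructed so that the asserted inequality holds at level $t+1$ for every $\theta \in \hat\Phi$. Fix $\theta \in \hat\Phi$. The key observation is that the event
$$
A_{t+1} := \{\theta_{t+2}\Delta S_{t+2} \leq -\kappa_{t+1}|\theta_{t+2}|\} \cap \{\theta_n \Delta S_n \leq 0,\ n = t+3,\ldots,T\}
$$
satisfies $P(A_{t+1} \mid \mathcal{F}_{t+1}) \geq \pi_{t+1}$ on $\{D_{t+1}\neq\{0\}\}$ by the induction hypothesis, and $\kappa_{t+1}|\theta_{t+2}| \geq 0$, so on $A_{t+1}$ one certainly has $\theta_{t+2}\Delta S_{t+2} \leq 0$ as well. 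I would condition first on $\mathcal{F}_{t+1}$ and then on $\mathcal{F}_t$: writing $E$ for the event whose probability we must bound below, we have
$$
P(E \mid \mathcal{F}_t) \geq E\!\left[ 1_{\{\theta_{t+1}\Delta S_{t+1} \leq -\kappa_t|\theta_{t+1}|\}} \, P(A_{t+1} \mid \mathcal{F}_{t+1}) \,\middle|\, \mathcal{F}_t \right] \geq E\!\left[ 1_{\{\theta_{t+1}\Delta S_{t+1} \leq -\kappa_t|\theta_{t+1}|\}} \, \pi_{t+1} \,\middle|\, \mathcal{F}_t \right],
$$
where the first inequality uses that $\{\theta_{t+1}\Delta S_{t+1} \leq -\kappa_t|\theta_{t+1}|\}$ is $\mathcal{F}_{t+1}$-measurable and the pull-out property of conditional expectation, and I should be careful about the event $\{D_{t+1} = \{0\}\}$: there $\theta_{t+2} = 0$ a.s. (since $\theta_{t+2} \in D_{t+1}$), so $\Delta S_{t+2}$ contributes nothing and one can absorb that case by defining $\pi_{t+1}$ to equal $1$ on $\{D_{t+1}=\{0\}\}$, or by running the induction on the relevant sub-event; this bookkeeping over the sets $\{D_s = \{0\}\}$ is the main technical nuisance.

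It then remains to bound $E[1_{\{\theta_{t+1}\Delta S_{t+1} \leq -\kappa_t|\theta_{t+1}|\}}\, \pi_{t+1} \mid \mathcal{F}_t]$ below by a strictly positive $\mathcal{F}_t$-measurable random variable on $\{D_t \neq \{0\}\}$. Here I cannot simply pull $\pi_{t+1}$ out (it is $\mathcal{F}_{t+1}$- not $\mathcal{F}_t$-measurable), so the natural device is a truncation-and-layer-cake argument: set $\pi_{t+1} \wedge \varepsilon$ for a constant $\varepsilon > 0$, and on the event $\{E[\pi_{t+1}\mid\mathcal{F}_t] > \delta\}$ (which, as $\delta \downarrow 0$, exhausts $\{\pi_{t+1} > 0 \text{ with positive } \mathcal{F}_t\text{-probability}\}$, hence a.s. all of $\Omega$ since $\pi_{t+1} > 0$ a.s.) one estimates, using that $\beta_t$ in \eqref{valaki} furnishes a uniform lower bound on the conditional probability of $\{\theta_{t+1}\Delta S_{t+1} \leq -\beta_t|\theta_{t+1}|\} \subseteq \{\theta_{t+1}\Delta S_{t+1} \leq -\kappa_t|\theta_{t+1}|\}$ once $\kappa_t \leq \beta_t$ (again arrange this by shrinking), that
$$
E\!\left[ 1_{\{\theta_{t+1}\Delta S_{t+1} \leq -\kappa_t|\theta_{t+1}|\}}\, (\pi_{t+1}\wedge\varepsilon) \,\middle|\, \mathcal{F}_t \right] \geq \varepsilon\, P(\theta_{t+1}\Delta S_{t+1} \leq -\kappa_t|\theta_{t+1}| \mid \mathcal{F}_t) - \varepsilon\, P(\pi_{t+1} < \varepsilon \mid \mathcal{F}_t);
$$
choosing $\varepsilon$ small enough (as an $\mathcal{F}_t$-measurable random variable) that $P(\pi_{t+1} < \varepsilon \mid \mathcal{F}_t) \leq \kappa_t/2$ a.s. — possible because $P(\pi_{t+1} < \varepsilon \mid \mathcal{F}_t) \to 0$ a.s. as $\varepsilon \downarrow 0$ by dominated convergence — yields the lower bound $(\varepsilon \kappa_t)/2$ on $\{D_t \neq \{0\}\}$. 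Taking $\pi_t := (\varepsilon \kappa_t)/2$ (with $\varepsilon$ the selected $\mathcal{F}_t$-measurable threshold), one obtains a strictly positive $\mathcal{F}_t$-measurable random variable, completing the induction. The genuinely delicate point, and the one I would spend the most care on, is making this last truncation step uniform in $\theta \in \hat\Phi$ — the bound $P(\pi_{t+1} < \varepsilon\mid\mathcal{F}_t)$ does not depend on $\theta$, and the one-step estimate from \eqref{valaki} is uniform in $\xi \in \Xi^d_t$ with $\xi \in D_t$, so the resulting $\pi_t$ indeed works simultaneously for all $\theta$; verifying that the ``min with $1$'' shrinkings needed to align $\kappa_t \leq \beta_t$ and the $\kappa_t$ versus $\beta_t$ mismatch in the statement are genuinely harmless is the routine-but-necessary final check.
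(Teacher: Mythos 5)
Your proposal is correct in substance but takes a genuinely different route from the paper. The paper fixes $t$ and runs a \emph{forward} induction on $m=t+1,\ldots,T$ over the partial products $1_{A_{t+1}}\cdots 1_{A_m}$ of indicators, with the induction hypothesis deliberately strengthened to hold under \emph{every} measure $Q\sim P$ (this is where Remark \ref{pop} enters); the non-$\mathcal{F}_t$-measurable factor $\kappa^Q_{m-1}$ that appears after conditioning on $\mathcal{F}_{m-1}$ is then absorbed by a change of measure $dR/dP:=\kappa^Q_{m-1}/E\kappa^Q_{m-1}$, reducing to the hypothesis at $m-1$ under $R$. You instead run a backward induction on $t$ and face the same core difficulty — the lower bound $\pi_{t+1}$ produced at the later stage is only $\mathcal{F}_{t+1}$-measurable — which you resolve by the truncation estimate $E[1_A(\pi_{t+1}\wedge\varepsilon)\mid\mathcal{F}_t]\geq\varepsilon P(A\mid\mathcal{F}_t)-\varepsilon P(\pi_{t+1}<\varepsilon\mid\mathcal{F}_t)$ with an $\mathcal{F}_t$-measurably chosen threshold $\varepsilon$ satisfying $P(\pi_{t+1}<\varepsilon\mid\mathcal{F}_t)\leq\kappa_t/2$; that inequality and the existence of such an $\varepsilon>0$ a.s.\ both check out, and the resulting $\pi_t=\varepsilon\kappa_t/2$ is uniform in $\theta$ as you argue. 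Your device is more elementary and avoids quantifying over equivalent measures, at the price of the explicit measurable-selection of $\varepsilon$. One point needs repair: on $\{D_{t+1}=\{0\}\}$ simply declaring $\pi_{t+1}:=1$ does not suffice, because although $\theta_{t+2}=0$ makes the condition $\theta_{t+2}\Delta S_{t+2}\leq-\kappa_{t+1}|\theta_{t+2}|$ trivial, the events $\{\theta_n\Delta S_n\leq 0\}$ for $n\geq t+3$ still need a positive conditional-probability bound there; the clean fix is to carry through the induction the additional statement that $P(\theta_n\Delta S_n\leq 0,\ n=t+1,\ldots,T\mid\mathcal{F}_t)\geq\pi_t$ holds on all of $\Omega$, which follows on $\{D_t=\{0\}\}$ from $E[\pi_{t+1}\mid\mathcal{F}_t]>0$. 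This is exactly the bookkeeping you flag as the main nuisance, and it is indeed routine once the hypothesis is strengthened. Finally, the $\kappa_t$-versus-$\beta_t$ relabelling you mention is equally implicit in the paper's own base case, so no objection there.
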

\begin{proof}
Define the events
\begin{eqnarray*}
A_{t+1} & :=& \{ \theta_{t+1}\Delta S_{t+1}\leq -\kappa_t \vert \theta_{t+1}\vert\},\\
A_n & := & \{{\theta}_n\Delta S_n\leq 0\},\quad t+2\leq n\leq T.
\end{eqnarray*}
We will prove, by induction on $m=t+1,\ldots,T$, that for all $Q\sim P$, there is $\pi^Q_t(m)>0$ a.s.
such that
\begin{eqnarray}\label{hypo}
E_Q(1_{A_{t+1}}\ldots 1_{A_m}| \mathcal{F}_t)\geq \pi_t^Q(m)\quad\mbox{ on }\{D_{t}\neq \{0\}\}.
\end{eqnarray}
For $m=t+1$ this is trivial for $P=Q$ by Proposition \ref{karakter} and it follows
for all $Q\sim P$ by Remark \ref{pop}.

Let us assume that \eqref{hypo} has been shown for $m-1$, we will establish it for $m$.
\begin{eqnarray*}
E_Q(1_{A_m}\ldots 1_{A_{t+1}}| \mathcal{F}_t) & = & E_Q(E_Q(1_{A_m}| \mathcal{F}_{m-1})
1_{A_{m-1}}\ldots 1_{A_{t+1}}| \mathcal{F}_t)\\
& \geq & E_Q(\kappa_{m-1}^Q 1_{A_{m-1}}\ldots 1_{A_{t+1}}| \mathcal{F}_t).
\end{eqnarray*}
Now define $R\sim P$ by $dR/dP:=\kappa^Q_{m-1}/E\kappa^Q_{m-1}$. It follows that
\begin{eqnarray*}
& & E_Q(\kappa_{m-1}^Q 1_{A_{m-1}}\ldots 1_{A_{t+1}}| \mathcal{F}_t) =
E_Q(\kappa_{m-1}^Q) E_R(1_{A_{m-1}}\ldots 1_{A_{t+1}}| \mathcal{F}_t) E_Q(dR/dQ|\mathcal{F}_t)\\ &\geq&
E_Q(\kappa_{m-1}^Q)E_Q(dR/dQ|\mathcal{F}_t)\pi^R_{t}(m-1)>0\mbox{ a.s.}\quad\mbox{ on }\{D_{t}\neq \{0\}\},
\end{eqnarray*}
showing the induction step. Finally, one can set $\pi_t:=\pi^P_t(T)$.
\end{proof}

\begin{lemma}\label{maine} Let $\theta(n)\in\hat{\Phi}$, $n\in\mathbb{N}$ such that 
$V(z,\theta(n))\geq -c$ for some $c\in\mathbb{R}$, for all $n$. Then
the sequence of the laws of $(\theta_1(n),\ldots,\theta_T(n))$ is tight.
\end{lemma}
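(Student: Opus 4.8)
The strategy is a backward induction on the time index, peeling off one trading date at a time and showing that the relevant portfolio components cannot blow up without driving the objective to $-\infty$. The key probabilistic input is Lemma \ref{makk}: for $\theta\in\hat\Phi$, conditionally on $\mathcal{F}_t$ there is a fixed positive probability $\pi_t$ (on the event $\{D_t\neq\{0\}\}$) of the simultaneous ``bad'' event that $\theta_{t+1}\Delta S_{t+1}\leq -\kappa_t|\theta_{t+1}|$ and all later increments $\theta_n\Delta S_n$, $n\geq t+2$, are nonpositive. On this event the terminal wealth satisfies $X^{\theta,z}_T-B\leq z+\sum_{j=1}^{t}\theta_j\Delta S_j-\kappa_t|\theta_{t+1}|-B$, so if $|\theta_{t+1}(n)|$ is large on a non-negligible set, the loss $[X^{\theta(n),z}_T-B]_-$ is large there with probability at least $\pi_t$, forcing $V^-(z,\theta(n))$ to be large and hence $V(z,\theta(n))=V^+-V^-$ to be very negative, contradicting $V(z,\theta(n))\geq -c$ (here I use $u_-(\infty)=\infty$, $w_-(1)=1$ together with monotonicity of $w_-$, so that $w_-$ evaluated at a probability bounded below by $\pi_t$ stays bounded below by $\pi_t$, and $V^-$ grows without bound).

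First I would fix $t=0$. Since $\mathcal{F}_0$ is trivial, $\theta_1(n)$ is deterministic; on $\{D_0\neq\{0\}\}=\Omega$ (if $D_0=\{0\}$ then $\theta_1\equiv 0$ and there is nothing to prove), Lemma \ref{makk} gives $P(\theta_1(n)\Delta S_1\leq-\kappa_0|\theta_1(n)|,\ \theta_n(n)\Delta S_n\leq 0,\ n\geq 2)\geq\pi_0>0$. On this event $[X^{\theta(n),z}_T-B]_-\geq \kappa_0|\theta_1(n)|-z-[(-B)]_-$ up to controlling $B$ from below; more precisely, since the later increments are nonpositive, the wealth is at most $z-\kappa_0|\theta_1(n)|$, so $[X^{\theta(n),z}_T-B]_-\geq (\kappa_0|\theta_1(n)|-z+B)_+$, which tends to $\infty$ in probability-$\pi_0$ measure if $|\theta_1(n)|\to\infty$ along a subsequence. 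Then $V^-(z,\theta(n))\geq \int_0^{u_-(\kappa_0|\theta_1(n)|-z-\|B\|)} w_-(\pi_0)\,dy$-type lower bound (being careful that $B$ may be unbounded: one splits the event further onto $\{B\geq -R\}$ for $R$ large, whose probability can be made close to $1$, uniformly) blows up, while $V^+(z,\theta(n))\leq u_+(\infty)<\infty$ is bounded by the upper bound on $u_+$. Hence $|\theta_1(n)|$ is bounded, so its laws (being deterministic, or in general supported on a bounded set) are tight.

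Having bounded $\theta_1(n)$, I would condition on $\mathcal{F}_1$ and repeat the argument at $t=1$: on $\{D_1\neq\{0\}\}$, with $\mathcal{F}_1$-conditional probability $\geq\pi_1$, the event $\{\theta_2(n)\Delta S_2\leq-\kappa_1|\theta_2(n)|,\ \theta_n\Delta S_n\leq 0,\ n\geq 3\}$ occurs, on which the wealth is at most $z+\theta_1(n)\Delta S_1-\kappa_1|\theta_2(n)|$; since $\theta_1(n)\Delta S_1$ is now tight, the same blow-up argument shows $|\theta_2(n)|$ cannot be large on a non-negligible set without sending $V^-(z,\theta(n))\to\infty$. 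Proceeding inductively through $t=2,\ldots,T-1$, each $|\theta_{t+1}(n)|$ is controlled given that $\theta_1(n),\ldots,\theta_t(n)$ are already tight (tightness of the partial sums of the increments follows from tightness of the $\theta_j(n)$ and a.s. finiteness of $\Delta S_j$, invoking Prokhorov to pass to subsequences where needed). This yields tightness of each coordinate, hence of the joint law of $(\theta_1(n),\ldots,\theta_T(n))$.

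\textbf{Main obstacle.} The delicate point is handling a possibly unbounded benchmark $B$: the clean bound ``wealth $\leq z-\kappa_t|\theta_{t+1}|$ forces a large loss'' only gives a large loss where $B$ is not too negative, so one must intersect the Lemma \ref{makk} event with $\{B\geq -R\}$ and argue that for $R$ large this intersection still has conditional probability bounded below by, say, $\pi_t/2$; this uses nothing more than $B<\infty$ a.s., but it must be done carefully and uniformly in $n$. A second subtlety is that the estimates are conditional on $\mathcal{F}_t$ and only valid on $\{D_t\neq\{0\}\}$; one must track that $\theta_{t+1}(n)=\widehat{\theta_{t+1}(n)}\in D_t$ (which holds since $\theta(n)\in\hat\Phi$) so that on $\{D_t=\{0\}\}$ the component is automatically zero, and integrate the conditional bounds back up against $P$ to get the unconditional contradiction with $V(z,\theta(n))\geq -c$. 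Everything else — the subsequence extractions, the monotonicity of $w_-$, the growth of $u_-$ — is routine.
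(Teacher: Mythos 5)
Your argument is essentially the paper's own: both proceed by forward induction over the trading dates, use the upper bound on $u_+$ to deduce $V^-(\theta(n),z)\leq c+C$, and invoke Lemma \ref{makk} to show that an exploding $|\theta_{k+1}(n)|$ would, with non-vanishing conditional probability, produce a large loss $[X_T^{\theta(n),z}-B]_-$ (after controlling $B$ and the already-tight partial wealth $X_k^{\theta(n),z}$ on high-probability events), contradicting the bound on $V^-$ via $u_-(\infty)=\infty$; the paper merely packages this as a uniform tail estimate $P((X_T^{\theta(n),z}-B)_-\geq c_n)\to 0$ obtained from the inverse $w_-^{-1}$, and handles the random $\pi_k,\kappa_k$ by the change of measure $dQ/dP=\pi_k/E\pi_k$, which is the careful execution of the step you flag as delicate. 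One small correction: monotonicity and $w_-(1)=1$ do \emph{not} give $w_-(\pi_t)\geq\pi_t$; what your blow-up of $V^-$ actually requires is only $w_-(p)>0$ for $p>0$, which is the same property the paper implicitly uses when asserting $w_-^{-1}(q)\to 0$ as $q\to 0$.
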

\begin{proof} It suffices to show, by induction on $m=1,\ldots,T$, that we have, for all $m$,
$$
P(|\theta_m(n)|\geq c_n)\to 0,\ n\to\infty,
$$
for every $0\leq c_n\to\infty$, $n\to\infty$, see e.g. Lemma 4.9 on p. 66 of \cite{kallenberg}.
The first step is similar to the induction step, so we omit it. Let us assume that
the above statement has been shown for $m=1,\ldots,k$, we will show it for $k+1$.

Define the right-continuous and non-decreasing function 
$w_-^{-1}(q):=\max\{p\in [0,1]:w_-(p)=q\}$, $q\in [0,1]$ 
(note the continuity of $w_-$). 
Clearly, $w_-^{-1}(q)\to 0$, $q\to 0$.

Since $u_+$ and hence also $V^+(\theta(n),z)$ are bounded above by a fixed constant $C$,
$V(\theta(n),z)\geq -c$ implies $V^-(\theta(n),z)\leq c+C$ for all $n$.
From \eqref{cc} we get $w_-(P((u_-(X_T^{\theta(n),z}-B)_-)\geq y))\leq (c+C)/y$ hence also 
$P(u_-((X_T^{\theta(n),z}-B)_-)\geq y)\leq w_-^{-1}((c+C)/y)$,
for all $y>0$. This shows that 
$$
P((X_T^{\theta(n),z}-B)_-\geq c_n)\leq P(u_-((X_T^{\theta(n),z}-B)_-))\geq u_-(c_n))\leq  
w_-^{-1}((c+C)/u_-(c_n))\to 0,
$$
as $n\to\infty$, since $u_-(x)\to \infty$, $x\to\infty$ and $w_-^{-1}(q)\to 0$, $q\to 0$.

We claim that, for all $1\leq j\leq k$,
$P(|\theta_j(n)\Delta S_j|\geq c_n)\to 0$, $n\to\infty$ for any
sequence $c_n\to\infty$, $n\to\infty$. Indeed, fix $\varepsilon>0$.
$P(|\Delta S_j|\leq s)\geq 1-\varepsilon/2$ for $s$ large enough. Also, 
for $n$ large enough,
$P(|\theta_j(n)|\leq c_n /s)\geq 1-\varepsilon/2$.
Hence $P(|\theta_j(n)\Delta S_j|\geq c_n)\leq \varepsilon$ for $n$ large enough, as claimed.
It follows that also
$$
l(n):=P(|X_k^{\theta(n),z}|\geq c_n/3)\leq \sum_{j=1}^k P(|\theta_j(n)\Delta S_j|\geq c_n/(3k))\to 0,\quad n\to\infty.
$$

Lemma \ref{makk} implies that 
\begin{eqnarray*}
P(|B|\geq c_n/3)+l(n)+P((X_T^{\theta(n),z}-B)_-\geq c_n/3) &\geq&\\
P((\sum_{j=k+1}^T \theta_j(n)\Delta S_j)_-\geq c_n) &\geq&\\
P(\theta_{k+1}(n)\Delta S_{k+1}\leq -\kappa_{k} |\theta_{k+1}(n)|,\ \theta_j(n)\Delta S_j\leq 0,\
k+2\leq j\leq T, & &\\ 
|\theta_{k+1}(n)|\geq c_n/\kappa_{k}, D_{t}\neq \{0\}) &\geq&\\ 
E[1_{D_{t}\neq \{0\}}1_{|\theta_{k+1}(n)|\geq c_n/\kappa_{k}}\pi_{k+1}],
\end{eqnarray*}
so $E[1_{D_{t}\neq \{0\}}1_{|\theta_{k+1}(n)|\geq c_n/\kappa_{k}}\pi_{k+1}]\to 0$, $n\to\infty$.
Define $Q\sim P$ by $dQ/dP:=\pi_{k+1}/E\pi_{k+1}$. Since $\theta(n)\in\hat{\Phi}$,
$\theta_{k+1}(n)=0$ on $\{D_{t}=\{0\}\}$. It follows that 
$Q(|\theta_{k+1}(n)|\geq c_n/\kappa_{k})\to 0$, $n\to\infty$, which implies $Q(|\theta_{k+1}(n)|\geq c_n)\to 0$ 
and hence also 
$P(|\theta_{k+1}(n)|\geq c_n)\to 0$, $n\to\infty$. The induction step is completed.
\end{proof}

\noindent{\em Proof of Theorem \ref{masodik}.} Let $\theta(j)$ be such that 
$V(\theta(j),z)> \sup_{\theta\in\Phi}
V(\theta,z)-1/j$, $j\in\mathbb{N}$. By Remark \ref{projection} we may and will suppose that $\theta(j)\in\hat{\Phi}$ 
for all $j$. By \eqref{cccc}, the supremum is at least $V(0,z)>-\infty$ so 
$\inf_j V(\theta(j),z)>-\infty$, hence Lemma \ref{maine}
shows the tightness of the sequence of random variables
$$
(S_1,\ldots,S_T,\theta_1(j),\ldots,\theta_T(j)),\ j\in\mathbb{N}.
$$
Following verbatim the proof of Theorem 7.4 in \cite{cr} we get $\theta^*\in\Phi$ such that
(along a subsequence) 
$$
(S_1,\ldots,S_T,\theta_1(j),\ldots,\theta_T(j))\to (S_1,\ldots,S_T,\theta_1^*,\ldots,\theta_T^*)
$$
in law as $j\to\infty$, hence also $X_T^{\theta(j),z}$ converge to $X_T^{\theta^*,z}$ in law, $j\to\infty$.

As $u_{\pm},w_{\pm}$ are continuous, $w_{\pm}(P(u_{\pm}([X^{\theta(n),z}-B]_{\pm})\geq y))$
tend to $w_{\pm}(P(u_{\pm}([X^{\theta^*,z}-B]_{\pm})\geq y))$ outside the discontinuity points
of the cumulative distribution functions of $u_{\pm}([X^{\theta^*,z}-B]_{\pm})$, in particular,
for Lebesgue-a.e. $y$. Fatou's lemma implies 
$$
\limsup_{j\to\infty}V(\theta(j),z)\leq 
V(\theta^*,z),
$$
which shows that $\theta^*$ satisfies \eqref{problem}.
\hfill $\Box$

\section{A counterexample}

Let us define $\mathcal{P}:=\{\mathrm{Law}(X^{\phi,0}_T):\phi\in\Phi\}$.
The proof of Theorem \ref{masodik} consisted of two steps: first, the relative compactness
of the sequence of optimisers (for the weak convergence of probability measures) was shown using Lemma \ref{maine}; 
second,
the closedness of the set $\mathcal{P}$ was established referring to the proof of Theorem 7.4 in \cite{cr},
under Assumption \ref{AAA}.

In this section we provide an example which shows that the latter closedness property can
easily fail unless additional assumptions (such as Assumption \ref{AAA} of the present paper or Assumption 6.1
of \cite{cr}) are made. 
This fact is surprising since the set $\{X^{\phi,0}_T:\phi\in\Phi\}$
is closed in probability, even without (NA), see Proposition 2 of \cite{stricker} and Proposition 6.8.1
of \cite{ds}.

Our example will be a one-step model with one risky asset and a non-trivial 
initial sigma-algebra. Let
$U$ be uniform on $[0,1]$ and let $Y$ be a $\mathbb{Z}$-valued random variable, independent of $U$,
with $P(Y=-1)=1/2$, $P(Y=k)=1/2^{k+1}$, $k\geq 1$. Define $\mathcal{F}_0:=\sigma(U)$,
$\mathcal{F}_1:=\sigma(U,Y)$. Set $S_0=0$, $S_1=\Delta S_1:=-1$ if $Y=-1$
and $\Delta S_1=f_k(U)$ if $Y=k$, $k\geq 1$ where $f_k(x):=3^k+1/2+q_k(x)$, $x\in [0,1]$ and
$q_k$ is a complete orthogonal system in the Hilbert space
$$
\{h\in L^2([0,1],\mathcal{B}([0,1]),\mathrm{Leb}):\int_0^1 h(x)dx=0\}
$$ 
such that
each $q_k$ is continuous and $|q_k(x)|\leq 1/2$, $x\in [0,1]$. Such a system can easily be
constructed e.g. from the trigonometric system. This model clearly satisfies (NA)  but we claim that 
$$
\mathcal{P}=\{\mathrm{Law}(\phi \Delta S_1):\phi\mbox{ is }\mathcal{F}_0\mbox{-measurable}\}
$$
is \emph{not} closed for weak convergence.

We first construct a certain limit point for a sequence in $\mathcal{P}$. A ``creation of more
randomness'' takes place in the next lemma.
\begin{lemma}
 Define $g_n(x):=n(x-k/n)$, $k/n\leq x<(k+1)/n$, $k=0,\ldots,n-1$ and set $g_n(1)=1$, for $n\in\mathbb{N}$.
 We claim that $\mu_n:=\mathrm{Law}(U,g_n(U))$ converges weakly to 
 $\mu_{\infty}:=\mathrm{Law}(U,V)$, $n\to\infty$,
 where $V$ is uniform on $[0,1]$ and it is independent of $U$.
\end{lemma}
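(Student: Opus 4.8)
The plan is to verify weak convergence of $\mu_n$ to $\mu_\infty$ by testing against bounded continuous functions $h\in C_b([0,1]^2)$, exploiting the explicit structure of $g_n$. The key observation is that on each dyadic-type interval $[k/n,(k+1)/n)$ the map $g_n$ is an affine bijection onto $[0,1)$, so for fixed $x$-coordinate in a small window the $y$-coordinate $g_n(U)$ ``sweeps'' uniformly across $[0,1]$. Concretely, I would compute
\begin{equation*}
\int h\, d\mu_n = \int_0^1 h(x,g_n(x))\, dx = \sum_{k=0}^{n-1} \int_{k/n}^{(k+1)/n} h\bigl(x, n(x-k/n)\bigr)\, dx,
\end{equation*}
and in the $k$-th summand substitute $v = n(x-k/n)$, $x = k/n + v/n$, giving $\frac1n\sum_{k=0}^{n-1}\int_0^1 h(k/n + v/n, v)\, dv$.

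The next step is a uniform-continuity argument: since $h$ is continuous on the compact set $[0,1]^2$ it is uniformly continuous, so for $\varepsilon>0$ there is $\delta$ with $|h(x_1,v)-h(x_2,v)|\le\varepsilon$ whenever $|x_1-x_2|\le\delta$; choosing $n$ large enough that $1/n<\delta$ lets me replace $h(k/n+v/n,v)$ by $h(k/n,v)$ with error at most $\varepsilon$ uniformly. After this replacement the expression becomes $\frac1n\sum_{k=0}^{n-1}\int_0^1 h(k/n,v)\, dv + O(\varepsilon)$, which is a Riemann sum (in the $x$-variable) for the continuous function $x\mapsto\int_0^1 h(x,v)\, dv$ and therefore converges to $\int_0^1\int_0^1 h(x,v)\, dv\, dx = \int h\, d\mu_\infty$. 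Letting $\varepsilon\to 0$ finishes the argument.

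I expect the only mild technical point to be the interchange of the Riemann-sum limit with the $\varepsilon$-approximation, which is handled cleanly by the order of quantifiers above (fix $\varepsilon$, send $n\to\infty$, then send $\varepsilon\to 0$), and the harmless treatment of the single endpoint $g_n(1)=1$, which contributes zero mass under Lebesgue measure. There is no real obstacle here: the content of the lemma is precisely that the piecewise-linear ``sawtooth'' $g_n$ decouples asymptotically from $U$ and becomes uniform, and the substitution plus uniform continuity of test functions makes this rigorous in a few lines. One could alternatively phrase it via convergence of characteristic functions $E[\exp(i s U + i t g_n(U))]$, but the Riemann-sum route is the most transparent.
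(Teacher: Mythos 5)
Your argument is correct, but it takes a different route from the paper. You verify weak convergence directly against bounded continuous test functions: writing $\int h\,d\mu_n=\sum_k\int_{k/n}^{(k+1)/n}h(x,n(x-k/n))\,dx$, changing variables on each tooth of the sawtooth to get $\frac1n\sum_{k=0}^{n-1}\int_0^1 h(k/n+v/n,v)\,dv$, and then using uniform continuity of $h$ plus a Riemann-sum limit for the continuous function $x\mapsto\int_0^1 h(x,v)\,dv$. The paper instead invokes the fact (Theorem 29.1 of Billingsley) that it suffices to check convergence of $\mu_n([0,a]\times[0,b])$ for all $a,b$, and computes these rectangle masses exactly: on each of the $l(n)$ full periods contained in $[0,a]$ the set $\{g_n\leq b\}$ has measure $b/n$, so $\mu_n([0,l(n)/n]\times[0,b])=b\,l(n)/n$, with the remaining sliver contributing at most $1/n$. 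The paper's computation is shorter and entirely elementary (no change of variables, no uniform-continuity estimate), at the cost of citing the convergence-determining property of rectangles; your version is self-contained modulo only the definition of weak convergence and makes the ``decoupling'' mechanism more transparent, since the substitution exhibits exactly how the second coordinate equidistributes within each window of the first. Both proofs are complete; your handling of the quantifier order (fix $\varepsilon$, let $n\to\infty$, then $\varepsilon\to 0$) and of the null endpoint $x=1$ is fine.
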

\begin{proof}
 It suffices to prove that, for all
 $0\leq a,b\leq 1$, we have $\mu_n([0,a]\times [0,b])\to \mu([0,a]\times [0,b])$,
 see Theorem 29.1 in \cite{bill}. Fix $a,b$ and define, for all $n$, $l(n)$  as the
largest integer with $l(n)/n\leq a$.
By the definition of $g_n$, we have that, for all $n\in\mathbb{N}\cup \{\infty\}$,
$$
\mu_n([0,l(n)/n]\times [0,b])=bl(n)/n.
$$
It is also clear that $\mu_n([0,a]\times [0,b])-\mu_n([0,l(n)/n]\times [0,b])\leq 1/n$
holds for all $n\in\mathbb{N}$ and also 
$\mu_{\infty}([0,a]\times [0,b])-\mu_{\infty}([0,l(n)/n]\times [0,b])\leq 1/n$,
hence $\mu_n([0,a]\times [0,b])\to \mu_{\infty}([0,a]\times [0,b])$, $n\to\infty$.
 \end{proof}
 
Define $\phi_n:=g_n(U)+1$, $n\in\mathbb{N}$. It follows that the sequence of triplets $(U,Y,\phi_n)$, $n\in\mathbb{N}$ converges
to $(U,Y,V)$ in law, where $V$ is uniform on $[1,2]$ and independent of $(U,Y)$.
Define $\bar{\phi}:=V$. Note that equipping $\mathbb{Z}$ with the discrete topology, $\Delta S_1$
is a continuous function of $(U,Y)$, hence, by the continuous mapping theorem, 
$\mathrm{Law}(\phi_n\Delta S_1)$
converges weakly to $\nu:=\mathrm{Law}(\bar{\phi}\Delta S_1)$. We claim, however, that
$\nu\notin\mathcal{P}$. 

Aguing by contradiction, let us suppose the existence of a Borel-function $g$ such that with $\phi:=g(U)$
one has $\nu=\delta:=\mathrm{Law}(\phi\Delta S_1)$. Let $s$ denote the support of the law of $\phi$.
If $s\cap (-\infty,0)\neq \emptyset$ then the support of $\delta$ would be unbounded from below
hence it cannot be equal to $\nu$. Hence $\phi\geq 0$ a.s. and then $-s=\mathrm{supp}(\nu)\cap (-\infty,0]=[-2,-1]$,
so $s=[1,2]$.

This implies that the following (a.s.) equalities hold between events:
$$
A_k:=\{\phi \Delta S_1\in [3^k, 2\times 3^k +2]\}=\{Y=k\}
=\{\bar{\phi} \Delta S_1\in [3^k, 2\times 3^k +2]\},
$$
for all $k\geq 1$. Then, by independence of $U$ from $Y$ and $V$ of $(U,Y)$,
\begin{eqnarray*}
E[\phi \Delta S_1 1_{A_k}]=P(Y=k)E[g(U) f_k(U)]=(1/2^{k+1})\int_0^1 g(x) f_k(x)dx &=&\\
E[\bar{\phi} \Delta S_1 1_{A_k}]=(1/2^{k+1})EV Ef_k(U)=(1/2^{k+1})(3/2)(3^k+1/2).
\end{eqnarray*}
It follows that, for all $k$,
$$
c_k:=\int_0^1 g(x)q_k(x)dx=E[g(U) (f_k(U)-3^k-1/2)]=[3/2-Eg(U)](3^k+1/2).
$$
Since the $q_k$ are orthogonal and uniformly bounded, necessarily $\sum_{k=1}^{\infty} c_k^2<\infty$, so $c_k=0$
for all $k$. This implies $Eg(U)=3/2$. By the completeness of the sequence $q_k$ we also
get that $g$ is a.s. constant, so $\phi=3/2$. This contradiction with $s=[1,2]$ shows our claim.

\end{document}